\pdfoutput=1
\RequirePackage{etoolbox}
\documentclass[acmtog,screen,table,authorversion]{acmart}

\usepackage{booktabs} % For formal tables
\usepackage{cancel}

\usepackage{geometrycollective}
\usepackage{enumitem}
\usepackage{kbordermatrix}
\usepackage{multirow}

\usepackage{xcolor,colortbl}
\definecolor{gOrange}{HTML}{ce8e00}
\definecolor{gBlue}{HTML}{0076b9}
\definecolor{gGreen}{HTML}{008c61}
\definecolor{accent}{HTML}{168587}
\definecolor{aliceblue}{rgb}{0.94, 0.97, 1.0}

\newcommand{\connectionUnitary}{r}
\newcommand{\connectionAngle}{\rho}
\newcommand{\fieldRotationAngle}{\omega}

\newcommand{\DirichletEnergy}{\mathcal{E}^D}
\newcommand{\CircularWellPotential}{\mathcal{W}}
\newcommand{\DiscreteGinzburgLandau}{\mathcal{G}\hspace{-.5mm}\mathcal{L}}
\newcommand{\Landmarks}{\mathcal{L}}

\DeclareMathOperator*{\argmin}{arg\,min}

\setcopyright{cc}
\setcctype{by}
\acmJournal{TOG}
\acmYear{2026} \acmVolume{45} \acmNumber{4} \acmArticle{162}
\acmMonth{7} \acmDOI{10.1145/3811368}

\setcopyright{acmcopyright}
\acmSubmissionID{1122}
\graphicspath{{./images_arxiv/}}

\begin{document}
% Title portion
\title{Implicit Minimal Surfaces for Bijective Correspondences}
\author{Etienne Corman}
\email{etienne.corman@cnrs.fr}
\orcid{0009-0002-9401-2362}
\affiliation{%
 \institution{Universit\'{e} de Lorraine, CNRS, Inria, LORIA}
 \streetaddress{615 Rue du Jardin-Botanique}
 \city{Vand\oe uvre-l\`{e}s-Nancy}
 \country{France}
 \postcode{54506}
}
\author{Yousuf Soliman}
\email{yousufs@sidefx.com}
\orcid{0000-0003-4023-5026}
\affiliation{%
   \institution{Side Effects Software}
   \city{Toronto}
   \country{Canada}
}
\author{Robin Magnet}
\email{robin.magnet@inria.fr}
\orcid{0000-0002-2192-411X}
\affiliation{%
  \institution{Inria, Universit\'{e} Paris Cit\'{e}}
  \city{Paris}
  \country{France}
}
\author{Mark Gillespie}
\email{mark.gillespie81@gmail.com}
\orcid{0009-0000-5645-9636}
\affiliation{%
  \institution{Inria}
  \city{Palaiseau}
  \country{France}
}
\affiliation{%
  \institution{University of Utah}
  \city{Salt Lake City}
  \country{USA}
}

\renewcommand\shortauthors{Corman et al.}

\begin{abstract}
We introduce an \emph{implicit} representation of continuous, bijective, orientation-preserving maps between genus zero surfaces with or without boundary. The distortion of these maps can easily be minimized by optimizing the Ginzburg-Landau functional---a ubiquitous model in physics and differential geometry---leading to a simple algorithm for computing bijective correspondences using only standard tools of the tangent vector field toolbox. The method avoids combinatorial mesh modifications and does not require barrier functions to enforce bijectivity making it more robust to noise and simpler to implement. Moreover, the algorithm does not assume a bijective initialization and can untangle non-bijective correspondences generated by computationally cheaper methods such as functional maps. It supports the use of both landmark points and landmark curves to guide the correspondence.
The key idea is that a bijection between surfaces defines a two-dimensional mapping surface sitting inside the four-dimensional product space of the two inputs, and this mapping surface can be stored implicitly as the zero set of a \emph{complex section}---essentially a complex function defined on the product space. Now the distortion of the map can be optimized by minimizing the area of this mapping surface, which amounts to minimizing the Ginzburg-Landau functional of the complex section. We demonstrate the practical benefits of our method by comparing to state-of-the-art correspondence algorithms and show that our implicit representation offers improved stability and naturally supports constraints that are difficult to enforce with explicit map representations.
\end{abstract}

%
% The code below should be generated by the tool at
% http://dl.acm.org/ccs.cfm
% Please copy and paste the code instead of the example below.
%
\begin{CCSXML}
<ccs2012>
   <concept>
       <concept_id>10010147.10010371.10010396.10010402</concept_id>
       <concept_desc>Computing methodologies~Shape analysis</concept_desc>
       <concept_significance>500</concept_significance>
       </concept>
   <concept>
       <concept_id>10002950.10003714.10003727.10003729</concept_id>
       <concept_desc>Mathematics of computing~Partial differential equations</concept_desc>
       <concept_significance>300</concept_significance>
       </concept>
 </ccs2012>
\end{CCSXML}

\ccsdesc[500]{Computing methodologies~Shape analysis}
\ccsdesc[300]{Mathematics of computing~Partial differential equations}

%
% End generated code
%

\maketitle

% decrease the amount of spacing above fig
\setlength{\intextsep}{0.75em}
% set wrapfig appearance here so it doesn't mess with paper margins
\makeatletter
\patchcmd\WF@putfigmaybe{\lower\intextsep}{}{}{\fail}%
\AddToHook{env/wrapfigure/begin}{\setlength{\intextsep}{0pt}} % no space above wrapfig
\makeatother
\setlength{\columnsep}{0.5em} % margin size between text and wrapfig

\begin{figure}
    \centering
    \includegraphics{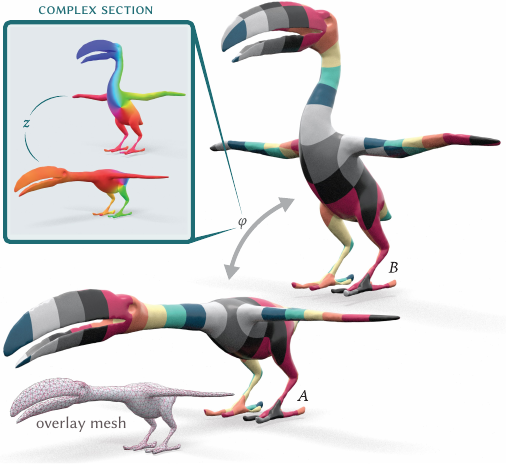}
    \caption{We introduce a new implicit representation of maps \(\varphi : A \to B\) between triangle meshes \(A\) and \(B\), encoding the maps as the zero sets of complex functions \(z\) on the product space \(A \times B\). In this representation we can compute high-quality orientation-preserving bijections between \(A\) and \(B\) by minimizing a simple Ginzburg-Landau energy, without requiring any combinatorial mesh modifications, barrier functions, or a bijective initialization. Once an implicit map has been computed, it encodes not only the vertex map, but also the entire overlay mesh under the correspondence.}
    \label{fig:Teaser}
\end{figure}

\section{Introduction}
\label{sec:Introduction}

% Mapping between surfaces are fundamental
Computing a map between two surfaces \(A\) and \(B\) is a ubiquitous problem in geometry processing. It is a prerequisite for analyzing data across collections of shapes and is essential for transferring data such as textures, segmentations, or other semantic attributes, which are crucial to a wide range of applications from studying brain-fold variations in medical imaging to reconstructing morphological correspondences in paleontology, and many more tasks across computer graphics, computer vision, and scientific modeling.

% For "clean" data we want bijective continuous distortion minimizing maps
When the surfaces share the same topological class, the goal is typically to find a map that is bijective, continuous, and aligns salient geometric or semantic features---often formalized by looking for as-isometric-as possible maps. Yet it remains highly challenging to compute such maps: the underlying distortion energies are non-convex, while continuity and bijectivity are strict constraints that are difficult to enforce during optimization.

% Describing bijective and continuous maps is hard
In practice, even describing an exact map between two triangle meshes is cumbersome, yet it is necessary for evaluating a map's distortion. For every point on $A$, the map must specify its image on $B$, not only for the vertices of $A$ but also for all interior points of its faces. The most common approach is to describe the map by \emph{remeshing} both surfaces so that they share identical connectivity, either by intersecting the meshes~\cite{schmidt2020inter} or by replacing them altogether~\cite{schmidt2023surface}. However, this representation is deeply coupled to the mesh connectivity, whereas the map itself is a smooth, coordinate-invariant geometric object. Instead, we propose a new representation of correspondences that avoids combinatorial modifications entirely and relies only on well-established tools for processing tangent vector fields (\cref{fig:Teaser}).

\begin{figure}
    \includegraphics{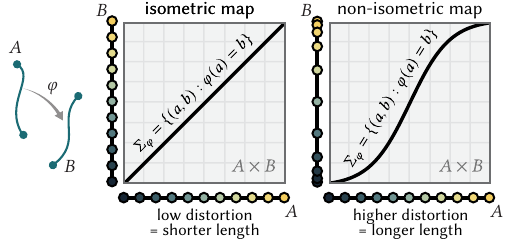}
    \caption{To illustrate the idea behind our method, we move down a dimension and consider a map \(\varphi : A \to B\) between two curves \figloc{(left)}. Such a map can be represented by its ``graph'', \ie{} the set of identified points \({\Sigma_\varphi = \{(a, b)\;:\; \varphi(a) = b\}}\) inside the product space \(\subset A \times B\) \figloc{(center, right)}. The more that \(\varphi\) distorts the curves, the longer its graph becomes---thus, we can compute high-quality maps by minimizing the length of the graph.}
    \label{fig:minimal_surface_2d}
\end{figure}

% We propose to go to 4D: bijective continuous correspondences are a continuous surface in the product space
\paragraph{Mappings as Surfaces in Product Space}
Any map $\varphi : A \rightarrow B$ carves out a two-dimensional manifold $\Sigma_{\varphi}$ from the product space $A \times B$:
\begin{equation}
    \Sigma_{\varphi} \coloneqq \{ (p, \varphi(p)) \;:\; p \in A \}.
    \label{eq:GraphDefinition}
\end{equation}
When $\varphi$ is continuous and bijective each point $p \in A$ is mapped to a unique point $\varphi(p) \in B$, and so the projection of $\Sigma_{\varphi}$ onto $B$ covers $B$ exactly once. Conversely, if a two-dimensional submanifold $\Sigma \subset A \times B$ has projections onto each input surface which cover that surface exactly once, then $\Sigma$ represents a continuous bijection between $A$ and $B$. Moreover, if the projections are orientation-preserving, then the bijection will also preserve orientation (\cref{fig:SymmetryMap}). And if $\Sigma$ has minimal area, it represents a distortion-minimizing map (\cref{fig:minimal_surface_4d}). (For the mathematical expression of this distortion, and its relation to other well known distortion measures, see \cref{sec:AreaMinimizationAndMetricDistortion}.)

We can illustrate this idea for the simpler problem of matching curves (\cref{fig:minimal_surface_2d}). In this setting, the product space is a rectangle, and a map is represented by the graph of the function $a \mapsto \varphi(a)$. This representation enables a direct measure of the distortion of the map: the length of the graph. The identity map, represented by a straight line, minimizes distortion, while any other mapping necessarily increases both the distortion and the length of the graph.

We thus shift the focus: rather than working with map $\varphi$ directly, we instead compute a surface in the product manifold $A \times B$ which implicitly encodes a distortion-minimizing correspondence.

\paragraph{Minimal Surfaces and Complex Fields}
% Minimal surfaces in 4D are computed by minimizing the GL functional (known results, intuition)
Now that we have framed the matching problem using minimal surfaces, a central question remains:  \emph{How can we effectively represent and compute a minimal surface in a four-dimensional manifold?} Fortunately a substantial body of literature addresses this problem using tools remarkably similar to those used in geometry processing. A key insight is that minimal surfaces can be characterized through the zero set of a vector-field like object known as a \emph{complex section}. In geometry processing, we often use the property that the zeros of a smooth vector field on a surface lie at isolated points, and the zeros of complex functions in 3D space lie along curves. Similarly, the zeros of a complex section on a 4D space lie on a two-dimensional subset.

This leads to our core representation: a map \(\varphi : A \to B\) is encoded as a complex-valued section $z$ on the product space \(A \times B\) which is zero on the mapping surface: so \(z(a, b) = 0\) if and only if \(b = \varphi(a)\).

The bijectivity of \(\varphi\) can also be expressed via the field \(z\). For each fixed $p \in A$, the restriction $b \mapsto z(p, b)$ must vanish at exactly one point $b \in B$, and vice versa for each fixed \(q \in B\). This condition is incorporated into a special choice of connection on $A$ and $B$, akin to the standard treatment of cross fields (see~\cref{sec:ImplicitRepresentationViaComplexLineBundles}). Strictly speaking, this connection does not force the output map to be bijective---it only imposes the relaxed constraint that the net number of signed zeros is equal to one for each restriction. But we find that in practice this relaxed condition suffices to compute high-quality maps (\cref{sec:Results}).

Using this representation, the computation of a minimal surface becomes accessible. Minimizers of the Ginzburg-Landau functional
\begin{equation}
    \DiscreteGinzburgLandau_{\varepsilon}(z) := \int_{A \times B}\big(\tfrac{1}{2}|\nabla z|^2 + \frac{1}{4\varepsilon^2}(1-|z|^2)^2\big)\vol_{A \times B},
\end{equation}
converge, as $\varepsilon \to 0$, to fields whose zero sets form minimal surfaces. Intuitively, if we were to normalize the field \(z\), zeros of \(z\) would become singularities where the Dirichlet energy blows up. The parameter $\varepsilon$ forces the field to approach unit length almost everywhere, leading to a large Dirichlet energy near the zeros. Thus, minimizing Dirichlet energy minimizes the area of the zero set. A more detailed discussion of this phenomenon is provided in~\cref{sec:ImplicitAreaMinimization}.

The behavior of the Ginzburg-Landau functional and related energies on spaces with nontrivial topology is still an active area of mathematical research, so in this work we mostly restrict our attention to surfaces with \emph{sphere-like topology}. However, we discuss the treatment of genus-zero surfaces with boundary in \cref{sec:SurfacesWithBoundary}.

\paragraph{Contributions}
In summary, we introduce a method to compute high quality continuous, bijective, orientation-preserving maps between genus zero surfaces with or without boundary, by minimizing the Ginzburg-Landau energy on a four-dimensional product space.
%Although this objective is non-convex, it offers a key advantage: it does not require any of the barrier terms commonly used to enforce injectivity~\cite{schuller2013injective}. Consequently, unlike many existing approaches, our method does not require a bijective initialization and can naturally ``untangle'' correspondences produced by non-bijective or noisy inputs. In addition, the formulation is fully intrinsic while inherently preserving orientation.
%
This approach yields several advantages:
\begin{itemize}[leftmargin={5mm}]
    \item Distortion minimization is achieved without barrier terms that prohibit non-bijective configurations, resulting in improved stability and simpler optimization;
    \item The method does not require a bijective initial map and can robustly ``untangle'' invalid or overlapping correspondences;
    \item In addition to landmark points, the algorithm naturally supports landmark curves, allowing points to slide along feature lines without explicit parameterization;
    \item The formulation is fully intrinsic, but preserves orientation;
    \item No combinatorial modifications to the input meshes are needed;
    \item The implementation relies solely on widely used operators from the tangent vector field processing toolbox. Leveraging the tensor product structure, we avoid constructing dense operators on the full product space.
\end{itemize}

\begin{figure}
    \includegraphics{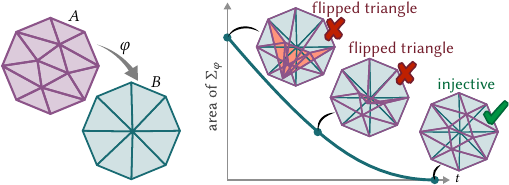}
    \caption{As in the case of curves, a map \(\varphi\) between surfaces \(A\) and \(B\) defines a surface \(\Sigma_\varphi\) in the product space whose area encodes the distortion of \(\varphi\). In particular, maps which flip triangles have larger surface area.
    \label{fig:minimal_surface_4d}}
\end{figure}

\subsection{Area Minimization and Metric Distortion}
\label{sec:AreaMinimizationAndMetricDistortion}

We now give a brief analysis of our distortion measure, the area of the ``graph'' \(\Sigma_\varphi\) associated to a map \(\varphi\) (\cref{eq:GraphDefinition}). In \cref{sec:RiemannianGeometryInTheProductOfSurfaces}, we express the area of \(\Sigma_\varphi\) using the singular values \(\sigma_1, \sigma_2\) of \(d\varphi\):
\begin{align}
    \textup{Area}(\Sigma_{\varphi}) = \int_{A} \sqrt{(1+\sigma_1^2)(1+\sigma_2^2)} \vol_{A}.
\end{align}
This area based distortion measure is bounded by the  Dirichlet energy\footnote{\(f^{\Area}(\sigma) \leq 1 + \tfrac12(\sigma_1^2 + \sigma_2^2)\) follows from the elementary inequality \(xy\leq \tfrac12(x^2+y^2)\)} and the area distortion\footnote{\(f^{\Area}(\sigma) \geq 1 + |\sigma_1\sigma_2|\) follows from the elementary inequality \((1+x^2)(1+y^2)\geq (1+|xy|)^2\)} of \(\varphi\), with equality (up to the addition of a constant) if and only if \(\varphi\) is conformal:
\begin{align}
    \int_{A}|\det d\varphi|~\vol_{A} \leq \Area(\Sigma_{\varphi}) - \Area(A) \leq \int_{A}\tfrac12|d\varphi|^2~\vol_{A}.
    \label{eq:DirichletEnergyBound}
\end{align}
\begin{definition}
    A bijective correspondence \(\varphi: A\to B\) is an \emph{area minimizing correspondence} if \(\Sigma_{\varphi}\) is a minimal surface.
\end{definition}
The bound in~\eqref{DirichletEnergyBound} also shows that maps with low total harmonic energy~\cite[Eq. 5]{Ezuz:2019:RHM} have small area, as \(\smash{\Sigma_{\varphi} = \Sigma_{\varphi^{-1}}}\), and thus a similar bound involving the area of \(B\) and the Dirichlet energy of \(\varphi^{-1}\) holds. Averaging the two inequalities shows that the area of \(\Sigma_{\varphi}\) is controlled by the sum of the Dirichlet energy of \(\varphi\) and its inverse.
Other distortion energies, that may also be related, are listed in~\cite[Table 1]{abulnaga2023symmetric} and \cite{poya2023geometric}. It would also be interesting to relate area minimizing correspondences to Gromov-Wasserstein distances~\cite{mandad2017variance}.

\subsection{Bijective Correspondences and Homology}
\label{sec:BijectiveCorrespondencesAndHomology}
We conclude with a brief comment on the constraints that we will later impose on the mapping surface \(\Sigma_\varphi\). Topological properties of \(\varphi\), like orientation preservation and bijectivity, are reflected in the topology of \(\Sigma_{\varphi}\). For instance, if we intersect \(\Sigma_\varphi\) with a ``vertical'' slice at a fixed \(a\in A\), or a ``horizontal'' slice at a fixed \(b\in B\),
\[
    \Sigma_{\varphi} \cap (\{a\}\times B) = \{(a,\varphi(a))\},\quad \Sigma_{\varphi}\cap(A\times\{b\}) = \{(\varphi^{-1}(b), b)\},
\]
we find a single intersection point. In particular, the projection maps \(\pi_{A}:\Sigma_{\varphi}\to A\) and \(\pi_{B}:\Sigma_{\varphi}\to B\) have degree \(\pm1\). The orientation preservation of \(\varphi\) is encoded in the sign of the degree, which in turn is encoded in the homology class of \(\Sigma_\varphi\). Thus we see that the homology class \([\Sigma_{\varphi}]\in H_2(A\times B)\) is not arbitrary---constant maps into \(B\), for example, induce graphs that are in the same homology class of \(A\) but do not satisfy the correct degree constraint for \(\pi_{B}\).

The homology class of surfaces \(\Sigma_\varphi\) arising from bijections \(\varphi\) is described explicitly in~\citet[Theorem 11.11, p.128]{Milnor:1974:CC} and is known as the \emph{diagonal homology class} \([\Delta]\) in \(A\times B\) since under the identification \(A\cong B\) induced by \(\varphi\), the graph \(\Sigma_{\varphi}\) is the diagonal in \(A\times A\). When \(A\) and \(B\) are simply-connected, the intersection of \(\Sigma_\varphi\) with horizontal and vertical slices completely determines its homology class. In \cref{sec:EncodingTheCorrectHomologyClass} we explain how we constrain the topology of \(\Sigma_\varphi\) to agree with \([\Delta]\) on slices.

\section{Related Work}
\label{sec:RelatedWork}

Computing mappings between surfaces is a long-standing challenge, and many algorithms have been proposed to address it.
Our method is distinguished from existing alternatives by a combination of properties: it minimizes isometric distortion while guaranteeing orientation preservation, though it does not strictly enforce bijectivity on discrete domains. Below we contrast it to several alternative families of approaches, but a comprehensive review lies beyond the scope of this paper---see the surveys by \citet{van2011survey,tam2012registration} and \citet{sahilliouglu2020recent} for broader context.
%These methods use diverse map representations, each yielding distinct properties such as bijectivity, continuity, or orientation preservation.

\subsection{Relaxed Map Representations}
% The quality of a map, including bijectivity, continuity, and orientation preservation, varies significantly with its representation.

\paragraph{Registration}
One approach to surface mapping is to deform one surface to match the other. While these methods simplify distortion analysis~\cite{beg2005computing,sharf2006snappaste,huang2008non,li2008global,tam2012registration,eisenberger2019divergence,eisenberger2020smooth}, they offer limited guarantees of bijectivity. Moreover, obtaining a close fit between the surfaces can involve expensive optimization, and aligning shape features often requires user intervention.

\citet{ezuz2019elastic,Ezuz:2019:RHM} minimizes the harmonic energy using vertex coordinates as variables. Unlike our method, their mapping remains undefined within triangles. However, like ours, it supports arbitrary initialization and can untangle correspondences.

\paragraph{Higher-Dimensional Relaxation}
Many works relax the bijectivity constraint by embedding the problem in higher-dimensional spaces, though this may introduce discontinuities, suboptimal distortion minimization, or orientation violations.
Kantorovich's optimal transport formulation replaces bijections with probability measures, sacrificing bijectivity and continuity. \citet{mandad2017variance} and \citet{Brifault:2025:OT} mitigate the lack of continuity by regularizing the transport plan.
Functional maps~\cite{ovsjanikov2012functional,ovsjanikov2016computing} represent mappings as linear operators acting on function spaces. However, converting these to continuous, bijective maps remains challenging~\cite{melzi2019zoomout,ren2021discrete}, and orientation preservation is non-trivial in the presence of intrinsic symmetries~\cite{ren2018continuous,ren2020maptree,donati2022complex}. Unlike exact bijective representations, functional maps enable partial mappings~\cite{litany2016puzzles}. \citet{solomon2012soft} frames correspondences as probabilistic measures, but it is subject to the same limitations.
\citet{vestner2017product} share our premise of computing maps as surfaces in a product space, noting that minimal surfaces reduce distortion---though without formal characterization for surfaces. Their method approximates $\Sigma_{\varphi}$ via kernel density estimation, reducing the problem to linear assignment. Consequently, their mapping is vertex-defined---with no guarantee of orientation preservation---and relies on search over the space of all possible vertex matchings.

\begin{figure}[!tpb]
    \includegraphics[width=\linewidth]{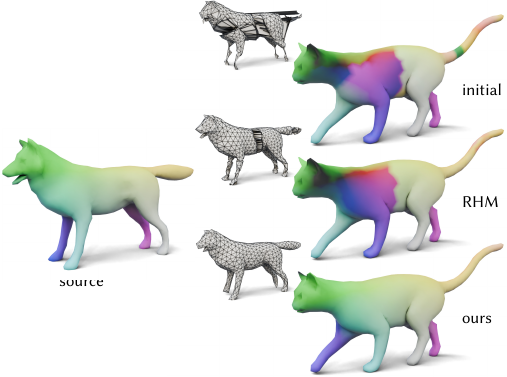}
    \caption{The maps computed by our algorithm are constrained to preserve normal orientation of the base surface. Intrinsic algorithms (such as functional maps) often fail to distinguish between symmetric parts. Here, we initialize with a map computed via \cite{ren2018continuous} that incorrectly swaps the left and right front legs. While RHM~\cite{Ezuz:2019:RHM}  improves smoothness, it fails to resolve the orientation reversal. In contrast, our approach successfully untangles the map, visualized by both color and geometry transfer.\label{fig:FmapsUntangling}}
\end{figure}

\subsection{Strictly Bijective Maps}

\paragraph{Mapping via Common Domains}
A provably bijective map can be obtained by composing two bijective maps from surfaces into a common reference domain. Methods leveraging Tutte embeddings ensure bijectivity by mapping into simple reference domains: the plane~\cite{kanai19973d,litke2005image,aigerman2014lifted,aigerman2015seamless}, periodic tilings~\cite{aigerman2015orbifold}, the sphere~\cite{aigerman2017spherical,Baden:2018:MR}, or hyperbolic space~\cite{tsui2013globally,aigerman2016hyperbolic,shi2016hyperbolic}. But distortion is typically controlled only for the surface-to-domain mapping, except for conformal maps~\cite{li2008globally,Baden:2018:MR} whose composition preserves conformality.
Other work has attempted to address these limitations by mapping to geometrically closer domains~\cite{schreiner2004inter}, blending multiple conformal maps~\cite{kim2011blended}, or computing mesh intersections in the common domain to optimize distortion~\cite{schmidt2019distortion,schmidt2020inter}. The latter achieves high accuracy but is computationally intensive and often converges to suboptimal solutions.
\citet{morreale2021neural} proposed using neural networks to encode surfaces, which alleviates some of the piecewise-linear limitation of existing mesh-based methods, but their method remains limited to pairs of surfaces which are both mapped into a common domain.

\paragraph{Remeshing}
A straightforward approach to encode correspondences is to approximate both surfaces with a mesh of shared connectivity. Distortion minimization then reduces to a remeshing problem, often solved via coarse-to-fine refinement~\cite{michikawa2001multiresolution,peng2016fast}. Early methods prioritized mesh approximation error~\cite{kraevoy2004cross,yang2020error}, while later works explicitly minimized distortion using volumetric mappings~\cite{yang2018volume}, intrinsic remeshing~\cite{takayama2022compatible}, or spherical parameterizations~\cite{schmidt2023surface}. These methods guarantee bijectivity and orientation preservation but require bijective initialization, precluding untangling.

\paragraph{Explicit Product Space} Another line of work enforces geometric consistency by explicitly searching for a submanifold in the discrete product space. \citet{windheuserGeometricallyConsistentElastic2011, windheuserLargeScaleIntegerLinear2011} initially formulated the search for an orientation-preserving diffeomorphism as an integer program on the product triangulation.
Recent work have attempted to improve the scalability of this approach with novel formulations and solvers~\cite{roetzerScalableCombinatorialSolver2022,amraniHighResolution3DShape2025,roetzerFastGloballyOptimal2025}.
However, these methods fundamentally need to define the connectivity of the product graph with massive constraint matrices, leading to prohibitive memory requirements. The resulting map is furthermore only combinatorial and doesn't define a continuous mapping within the faces.

In contrast, our method never explicitly meshes the product space, even though the implicit formulation encodes the geometric consistency constraints in the topology of the field's connection.

\begin{figure}[!tpb]
    \includegraphics{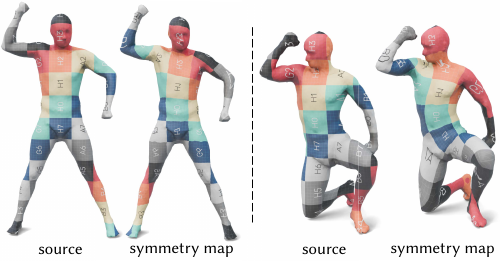}
    \caption{Our method preserves mapping orientation by encoding surface orientation in the connection form. This enables the computation of a symmetric map from an object to itself by simply inverting the normals of the target domain. \label{fig:SymmetryMap}}
\end{figure}

\subsection{Surface Reconstruction in Higher Dimensions}
Our algorithm builds a 2-dimensional surface embedded in a 4-dimensional space. \citet{kohlbrenner2023poisson} address the broader challenge of reconstructing a manifold from point samples in higher dimensions, but our setting is dramatically simplified because the surface of interest is defined directly as the zero set of an implicit function---not approximated from a sparse points set.

\subsection{Minimal Surfaces}
In the mathematical literature, the problem of computing minimal surfaces is often relaxed to computing minimal ``currents'' instead, which yields a convex relaxation of the problem in the space of generalized surfaces defined by \citet{Federer:1960:NIC}. For a computational introduction to the theory, see \citet{Wang:2021:CMS}. This approach has been used in computer graphics in applications ranging from neural surface modeling \cite{Palmer:2022:DC} to quad meshing \cite{Palmer:2024:LDF}. Unfortunately, when dealing with surfaces of codimension greater than 1, this relaxation is no longer tight. Indeed, even when considering two-dimensional surfaces bounded by a one-dimensional loop in \(\RR^4\), the relaxation to currents can already produce currents which have a lower area than any valid surface \cite{Young:1963:SEQ,DeLellis:2014:RMS}.

In the past twenty years, zero sets of critical points of the Ginzburg-Landau energy have been shown to be related to codimension-2 minimal surfaces~\cite{Jerrard:2002:JGL,Alberti:2005:VCF,DePhilippis:2022:NDM,Canevari:2023:YMH}, with very recent work showing that a generalization of the Ginzburg-Landau energy known as the \emph{self-dual Yang-Mills-Higgs functional}\footnote{it additionally includes a connection as a free variable of the optimization, penalizing the \(L^2\)-norm of its curvature in addition to the usual Ginzburg-Landau energy} provides a phase-field approximation of the codimension-2 area functional~\cite{Pigati:2021:MSA,Parise:2024:CSD}. These results are formulated in the language of complex line bundles, which provides the tools to generalize codimension-1 implicit surfaces represented as the level sets of real valued functions to codimension-2 implicit surfaces represented as the zero sets of complex sections with prescribed topology. This makes complex line bundles the perfect tool for our problem.

\section{Background}
\label{sec:Background}
\begin{figure}
\includegraphics{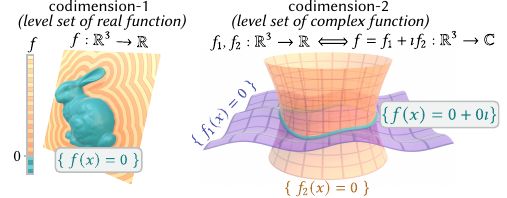}
\caption{\figloc{Left:} A codimension-1 object (like a 2D surface in 3D space) can be encoded as the zero level set of a real-valued function \(f\). \figloc{Right:} A codimension-2 object (like a 1D curve in 3D space, or a 2D surface in 4D space) can be encoded as the shared zero level set of a pair of real functions \(f_1, f_2\). Equivalently, a codimension-2 object is the zero level set of a single \emph{complex} function \(f(x) = f_1(x) + \imath f_2(x)\). This simple idea forms the basis of our implicit complex line bundle encoding (\cref{sec:ImplicitRepresentationViaComplexLineBundles}).}\label{fig:ImplicitRepresentations}
\end{figure}

% \begin{figure}
% \centering
% \includegraphics[width=.5\columnwidth]{notation.jpg}
% \caption{Notation}
% \label{fig:Notation}
% \end{figure}

In \cref{sec:ImplicitRepresentationViaComplexLineBundles,sec:ImplicitAreaMinimization} we explain more background about the underlying smooth mathematical theory, and we review the discretizations that we use in \cref{sec:CellComplexesAndProductMeshes,sec:DiscreteComplexLineBundles}. But the practically-minded reader can jump to \cref{sec:Algorithm} for a concrete description of our algorithm.

\subsection{Implicit Representation via Complex Line Bundles}
\label{sec:ImplicitRepresentationViaComplexLineBundles}
Our algorithm searches for a two-dimensional surface belonging to a specified homology class inside of a four-dimensional space (\cref{fig:ImplicitRepresentations}). Here we describe how these surfaces can be represented using mathematical objects known as \emph{complex line bundles}.

\paragraph{Warmup}
As a simpler example, consider a codimension-two subsets of a surface \(S\), \ie{} sets of oriented points. We could try to represent these points as the shared zero level set of a pair of functions \(f_1, f_2 : S \to \RR\), or equivalently as the zero level set of a single complex function \(f(x) = f_1(x) + \imath f_2(x)\). But not every collection of points can be represented as such a zero set. For instance, if \(f\) is a projection
\begin{wrapfigure}{r}{63pt}
    \includegraphics{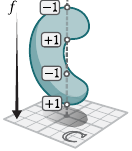}
\end{wrapfigure}
onto the \(xy\)-plane then its zero level set is the set of signed intersections between the surface and a ray shot up from the origin. As shown in the inset, such a point set will always have even size, and adding up the sign of each intersection point always yields a sum of 0. Indeed, the same is true for all complex functions, not just projections: for any smooth function \(f : S \to \CC\), the signs of the zeros sum to 0.

\begin{wrapfigure}{r}{45pt}
    \includegraphics{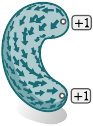}
\end{wrapfigure}
Alternatively, we could encode our point set as the zeros of a vector field. The Poincaré-Hopf theorem then guarantees that the indices of the zeros must sum to the Euler characteristic \(\chi(S)\), providing a different constraint on our zero set. So even though any vector field can \emph{locally} be represented as a complex function, the space of vector fields and the space of complex functions have different global structures---and different constraints on their zeros.

The space of vector fields and space of complex functions and are both examples of \emph{complex line bundles}. Formally, a complex line bundle on a manifold \(M\) is a space which associates a copy \(\CC_x\) of the complex plane to each point \(x\!\in\!M\), and locally looks like the product \(M \times \CC\). But its global topological structure may be different.

The analogue of a vector field on a general complex line bundle is a smooth mapping sending points \(x \in M\) to complex values \(z(x) \in \CC_x\). Such mappings are known as \emph{sections}. If \(M\) is a \(d\)-dimensional manifold, then the zero set of a section is an an oriented submanifold of dimension \(d-2\). Just as the zeros of a complex function or vector field sum to a fixed constant, the zeros of any section of a complex line bundle on \(M\) always lie in a particular \emph{homology class} of \(H_{d-2}(M;\ZZ)\), determined by the bundle's curvature.
%Conversely, for any class in \(H_{d-2}(M;\ZZ)\), one can construct a complex line bundle whose the zero sets lie in the given homology class.

\begin{wrapfigure}[6]{r}{70pt}
\vspace{-.5\baselineskip}
\includegraphics{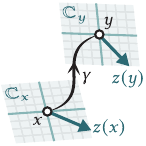}
\end{wrapfigure}
\paragraph{Connections and Curvature}
Sections can be studied using \emph{connections}, just as in the case of vector fields. A connection \(\nabla\) on a complex line bundle is a differential operator which provides a way of ``parallel transporting'' a value \(z(x) \in \CC_x\) along a path \(\gamma\) from \(x\) to \(y\) to obtain a value \(z(y) \in \CC_y\). However, parallel transport along different paths can result in different values in \(\CC_y\). Indeed, parallel transporting \(z(x)\) along a closed loop may produce a different value \(z\) upon returning to \(x\). The failure to close up is quantified by the curvature 2-form \(\Omega^\nabla\). See \cite[Appendix C]{Milnor:1974:CC} for more details.

\paragraph{Chern Classes}
The curvature 2-form is particularly important because it characterizes the topological class of the line bundle---and thus the homology class of our zero sets. The first Chern class of a complex line bundle is the cohomology class \({c_1 := \smash{[\tfrac{1}{2\pi} \Omega^\nabla]} \in H^2(M)}\). It encodes the homology class of our zero sets in the following sense: for any closed \(2\)-dimensional subset \(S \subseteq M\), the net number of signed intersections with the \((d\!-\!2)\)-dimensional zero set of a generic section is precisely \(\tfrac{1}{2\pi}\int_S \Omega^\nabla\). Remarkably, these values are always integers, and do not depend on the specific section or connection that we started from. The curvature of \emph{any} connection on the complex line bundle always lies in the same cohomology class. So by making an appropriate choice of complex line bundle structure we can control the topological class of the zero set of any section and can thus enforce the homology constraints of \cref{sec:BijectiveCorrespondencesAndHomology} by construction.

\paragraph{Connection Laplacians}
Just as the ordinary Laplacian measures the smoothness of scalar functions, connections can also be used to measure the smoothness of a section \(z\) on a complex line bundle. The connection Dirichlet energy
\begin{align}
    \DirichletEnergy(z) = \tfrac12\int_{M}|\nabla z|^2~\vol_{M}
\end{align}
then gives rise to the \emph{connection Laplacian} \(L^{\nabla}\) as the gradient of this Dirichlet energy.
When \(M = A\times B\) is the product of two surfaces \(A\) and \(B\), the Dirichlet energy can be expressed in a factorized form
\begin{align}
    \DirichletEnergy(z) = \!\int_{A}\left(\int_{B}|\nabla z|_{B}^2~\vol_{B}\right)\vol_{A} +\!\int_{B}\left(\int_{A}|\nabla z|_{A}^2~\vol_{A}\right)\vol_{B},
    \label{eq:DirichletEnergyDecomposition}
\end{align}
where \(|\nabla z|_{A}^2\) (\resp{}, \(|\nabla z|^2_{B}\)) denotes the norm of the covariant derivative of the section restricted to \(A\) (\resp{}, \(B\)).

\paragraph{Zero Set Distribution}
The placement of zeros of a smooth section in a complex line bundle is very closely related to the curvature of the connection. Experiments show that in low energy states of the connection Laplacian, zeros concentrate in regions of high curvature~\cite{knoppel2013globally,Weissmann:2014:SRS}. Recent theoretical work supports this intuition by showing that when you start with random sections whose zeros are distributed according to the curvature, the corresponding heat flow will connect these random zeros to the ground state zero set~\cite{Nicolaescu:2017:GBC}.
We use this intuition to design line bundles where curvature directs the zero set, producing high quality and controllable correspondences.

\subsection{Implicit Area Minimization}
\label{sec:ImplicitAreaMinimization}

The Ginzburg-Landau functional provides a way of minimizing the area of a surface encoded via a complex section, reducing the problem of bijective mapping to the computation of smooth sections on the product mesh. We briefly describe the smooth theory here, and later provide the discrete algorithm in \cref{sec:DiscreteGinzburgLandauMinimization}.

\paragraph{Allen-Cahn Energy}
Before introducing the full functional used in our method, we review the main ideas in the simpler codimension-1 case. Here a surface is represented as the zero set of a real-valued function \(u : M\to \RR\) and we consider the Allen-Cahn energy
\begin{equation}
    \mathcal{AC}_{\varepsilon}(u) := \int_{M}\tfrac{1}{2}|du|^2 + \frac{1}{4\varepsilon^2}(1 - u^2)^2\vol_{M}.
\end{equation}
The main contribution of the energy comes from the double well potential on the right which is minimized when \(u \equiv \pm 1\). The Dirichlet energy, on the other hand, prevents discontinuous jumps from a region where \(u\equiv +1\) to a region where \(u \equiv -1\). However, in the limit as \(\varepsilon \to 0\) the minimizers \(u_{\varepsilon}\) converge to function that has a jump across a minimal surface \cite{Modica:1977:USG}.

\paragraph{Ginzburg-Landau Energy}
The Ginzburg-Landau functional on a complex line bundle closely resembles the Allen-Cahn equation
\begin{align}
    \mathcal{GL}_{\varepsilon}(z) := \int_{M}\tfrac{1}{2}|\nabla z|^2 + \frac{1}{4\varepsilon^2}(1-|z|^2)^2\vol_{M},
    \label{eq:GinzburgLandau}
\end{align}
%
% \begin{wrapfigure}{r}{45pt}
%     \includegraphics{circular-well-potential.pdf}
% \end{wrapfigure}
\noindent{}The unit norm penalty is now minimized when \(z\in S^1\), and so we call it the \emph{circular well potential}. In the \(\varepsilon\to 0^+\) limit minimizers produce \(S^1\)-valued harmonic maps away from a codimension-2 set of zeros. Compared to the scalar valued case, the relationship between the Ginzburg-Landau functional and the area of its zero set is more subtle~\cite{Parise:2024:CSD}, but in certain situations the zeros of minimizers are known to form minimal surfaces~\cite{Lin:1999:CGL,Canevari:2023:YMH}. The intuition to keep in mind is (1) that \(\varepsilon\) controls the interface width between the zero set and the set where \(|z|\approx 1\) and (2) since the energy blows up as \(\varepsilon\to 0^+\) the blow up should happen on a set that is area minimizing.
% In~\cref{sec:Algorithm}, we describe our algorithm for computing a correspondence as the zero set of a low energy state of the Ginzburg-Landau functional.

\subsection{Cell Complexes and Product Meshes}
\label{sec:CellComplexesAndProductMeshes}
In our setting, the domain $M=A\times B$ is the product of two surfaces. This section introduces the construction of a discrete product space, and the next section presents the discretization of complex line bundles required to evaluate the discrete Ginzburg-Landau energy.

Mathematically, a \(d\)-dimensional mesh is an object called a \emph{cell complex}. Just as a polygon mesh is made up of zero-dimensional vertices, one-dimensional edges, and two-dimensional faces, a cell complex \(\smash{M = (M^0, M^1, \ldots, M^d)}\) is made up of \(k\)-dimensional \(k\)-cells \(\smash{\sigma^k_i \in M^k}\) for all \(k = 0, \ldots, d\). When the index \(k\) is unnecessary we following the usual convention for meshes and call the vertex set \(V_M := M^0\), the edge set \(E_M := M^1\), and the face set \(F_M := M^2\).

The boundary \(\partial \sigma^k_i\) of a \(k\)-cell \(\sigma^k_i\) is a collection of oriented \((k-1)\)-cells, which we write as a formal sum \(\smash{\partial \sigma^k_i = \sum_j \sigma^{k-1}_j}\). The boundary operator can be written as a matrix \(\partial_k \in \ZZ^{M^{k-1} \times M^k}\) where \(\smash{\left(\partial_k\right)_{j, i}}\) is \(1\) if \(\smash{\sigma^{k-1}_j}\) appears in \(\smash{\partial \sigma^k_i}\) with positive orientation, \(-1\) if it appears with negative orientation, and zero otherwise. The transpose of these boundary operators \(\smash{\mathsf{d}_{k-1} := \partial_{k}^\top}\) are the discrete exterior derivatives in discrete exterior calculus~\cite{Desbrun:2005:DEC}, and we denote the space discrete differential \(k\)-forms by \(\Omega^k(M;\mathbb{R})\).

\begin{figure}
    \includegraphics{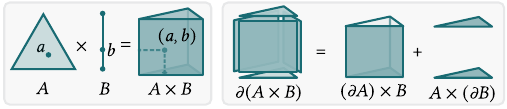}
    \caption{\figloc{Left}: Geometrically, the product space \(A \times B\) is the result of extruding \(A\) along \(B\). Algebraically, points of \(A \times B\) are pairs \((a, b)\) for \(a \in A\) and \(b \in B\). \figloc{Right}: The boundary operator obeys a product rule on product cells.}
    \label{fig:ProductSpace}
\end{figure}
\paragraph{Product Spaces}
Given two topological spaces \(A\) and \(B\), the product \(A \times B\) is the set of ordered pairs \((a, b)\) for \(a \in B\) and \(b \in B\) (\cref{fig:ProductSpace}). Similarly, if we have a pair of triangle meshes \(A = (V_A, E_A, F_A)\) and \(B = (V_B, E_B, F_B)\), their product space is a four-dimensional cell complex whose cells are products of cells from \(A\) and \(B\). Its vertex set is precisely the set \(V_{A \times B} = V_A \times V_B\) of pairs of vertices, while its edge set is \(E_{A \times B} = E_A \times V_B \cup V_A \times E_B\), its face set is \(F_{A \times B} = V_A \times F_B \cup E_A \times E_B \cup F_A \times V_B\). In general its set of \(k\)-cells is
\begin{equation}
    (A \times B)^k = \bigcup_{i+j=k} A^i \times B^j.
\end{equation}
The boundary operator on \(A \times B\) obeys a product rule (\cref{fig:ProductSpace})
\begin{equation}
    \partial \big(\sigma^k_i \times \sigma^l_j\big) = \big(\partial \sigma^k_i\big) \times \sigma^l_j + (-1)^k \sigma^k_i \times \big(\partial \sigma^l_j\big),
\end{equation}
which we can write in matrix notation as
\begin{equation}
    \partial^{A \times B}_{k} = \sum_{i+j=k}\partial^A_i \otimes \Id^B_j + (-1)^i\, \Id^A_i \otimes \partial^B_j.
\end{equation}

\subsection{Discrete Complex Line Bundles}
\label{sec:DiscreteComplexLineBundles}

\begin{wrapfigure}[4]{r}{63pt}
    \vspace{-1.5\baselineskip}
    \includegraphics{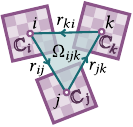}
\end{wrapfigure}
We use the discrete complex line bundles with connections defined by \citet{Knoppel:2016:CLB}. A discrete complex line bundle on a cell complex \(M = (V, E, F, \ldots)\) consists of:
\begin{enumerate}[leftmargin={7mm}]
\item a copy \(\CC_i\) of the complex plane for each vertex \(i \in V\),
\item a connection \(\connectionUnitary_\ij \in \CC\) of unit norm for each oriented edge \(\ij\in E\), satisfying \(\connectionUnitary_{\ji} = \connectionUnitary_{\ij}^{-1}\). These complex numbers act as discrete parallel transport maps \(\connectionUnitary_{\ij}:\CC_{i}\to\CC_{j}\),
\item a curvature \(\Omega_\ijk\!\in\!\RR\) for each face \(\ijk\in F\), that satisfies \(d\Omega = 0\) (trivially true on a surface), and is compatible with the connection: for each face \(\ijk\in F\), we have \(\connectionUnitary_{\ki}\connectionUnitary_{\jk}\connectionUnitary_{\ij} = \exp(\imath~\Omega_{\ijk})\).
\end{enumerate}
Just as in the smooth setting, the topological class of a discrete complex line bundle (and the implicit surfaces it can represent) are determined by the curvature 2-form \([\tfrac{1}{2\pi}\Omega] \in H^2(M)\). Since the parallel transport maps only determine \(\Omega\) modulo \(2\pi\), we have the freedom to modify the topological class of the bundle without changing the connection---in \cref{sec:ConstructingSurfaceConnections}, we use this ability to construct an appropriate bundle for our mapping problem.

Often, a discrete connection is represented a rotation angle \(\connectionAngle_{\ij}\) for each oriented edge, with the parallel transport maps \(\connectionUnitary_\ij = \smash{e^{\imath\connectionAngle_\ij}}\).

\paragraph{Sections and Zeros}
A discrete section \(z\) is an assignment of a complex number \(z_i \in \CC_i\) to each vertex \(i\). Given a section \(z\), the rotation of \(z\) along edge \(\ij\) is measured by the \emph{angular one-form}
\begin{equation}
    \fieldRotationAngle^z_{\ij} := \arg\!\left(\frac{z_j}{\connectionUnitary_{\ij} z_i}\right) \in [-\pi, \pi).
    \label{eq:AngularOneForm}
\end{equation}
If \(\smash{\fieldRotationAngle^z_{\ij} = -\pi}\), then \(z\) has a zero along edge \(\ij\). In the generic case where \(\fieldRotationAngle^z_{\ij} \in (-\pi, \pi)\), Knöppel and Pinkall define the index 2-form
\begin{equation}
    \textup{ind}^z := \tfrac{1}{2\pi}\big(d\fieldRotationAngle^z + \Omega\big), \label{eq:IndexForm}
\end{equation}
and prove that \(\smash{\textup{ind}^z_{\ijk}}\) is always an integer, which gives the sum of the indices of all zeros within face \(\ijk\). In contexts where the section \(z\) is clear, we drop the superscript and refer to \(\fieldRotationAngle_\ij\) and \(\textup{ind}_\ijk\).

Note that\([\textup{ind}^z] \equiv \smash{[\tfrac{1}{2\pi} \Omega]}\) as cohomology classes in \(H^2(M;\RR)\), as \(d\fieldRotationAngle^z\) is a closed discrete 2-form. Since \(\textup{ind}^z\) encodes the zeros of \(z\), we conclude that for any section \(z\) of the bundle, the zero set always lies in the same homology class---just as in the smooth setting.

\subsubsection{The Discrete Levi-Civita Connection}
\label{sec:DiscreteLeviCivitaConnection}
\begin{wrapfigure}{r}{58pt}
    \vspace{-.25\baselineskip}
    \includegraphics{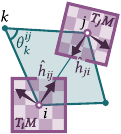}
\end{wrapfigure}
One important example of a discrete complex line bundle is the discrete Levi-Civita connection on the tangent bundle of a mesh \(M\), which~\citet{knoppel2013globally} construct by ``flattening'' a neighborhood of each vertex as follows: let \(\smash{\theta_i^{\jk}}\) be the corner angle of triangle \(\ijk\) at vertex \(i\), and let \(\smash{\Theta_i := \sum_{\ijk \succ i}\theta_i^{\jk}}\). Multiplying the angles around \(i\) by a factor of \(\tfrac{2\pi}{\Theta_i}\) yields scaled angles \(\smash{\tilde \theta_i^{\jk} := \tfrac{2\pi}{\Theta_i} \theta_i^{\jk}}\) that sum to \(2\pi\), allowing us to identify the tangent space of vertex \(i\) with the complex plane \(T_{i}M\cong \CC\). Then each halfedge \({\ij}\) leaving vertex \(i\) can be assigned a unit complex number \(\hat{h}_{\ij} \in T_{i}M\) giving its direction in this tangent space. The discrete Levi-Civita connection maps \(T_{i}M\) to \(T_{j}M\) while preserving the direction of edge \(\ij\), which can be written
\begin{equation}
\connectionUnitary^{LC}_{\ij} := -\hat{h}_{\ji}\hat{h}_{\ij}^{-1},
\label{eq:TangentBundleLeviCivitaRotations}
\end{equation}
after choosing any coordinates on \(T_{i}M\) and \(T_{j}M\).

To complete the complex line bundle, we assign curvature to each face \(\ijk\) based on its rescaled corner angles
\begin{equation}
\Omega^{LC}_{\ijk} = \tilde{\theta}_i^{\jk} + \tilde{\theta}_{j}^{\ki} + \tilde{\theta}_{k}^{\ij} - \pi.
\label{eq:TangentBundleLeviCivitaCurvatures}
\end{equation}
These curvatures are compatible with \(\connectionUnitary^{LC}\) and satisfy the Gauss-Bonnet theorem \(\sum \Omega^{LC}_{\ijk} = 2\pi \chi(M)\).

\subsubsection{Trivial Connections}
\label{sec:TrivialConnections}
Starting from any initial connection \(\connectionUnitary^0\) with curvature \(\Omega^0\), we can modify the curvature by solving a Poisson equation. If we are given another discrete 2-form \({\Omega}\in\Omega^2(M;\mathbb{R})\) on a simply connected triangle mesh \(M\) that has the same integral, \ie,
\begin{equation}
\sum_{\ijk\in F}\Omega^0_{\ijk} = \sum_{\ijk\in F}{\Omega}_{\ijk},
\end{equation}
we can find a discrete 1-form \(\alpha\in\Omega^1(M;\RR)\) that satisfies
\begin{equation}
\mathsf{d}_1\alpha = \Omega - \Omega^0.
\end{equation}
Using the Hodge decomposition, this amount to solving a Poisson equation on the dual mesh\footnote{letting \(\mathsf{L}^{-} = \mathsf{d}_1 *_1^{-1} \mathsf{d}_1^{\top}\) we have that \(\alpha = *_1^{-1}\mathsf{d}_1^\top\beta\) where \(\beta\) solves \(\mathsf{L}^{-}\beta = {\Omega} - \Omega^0\)}. The discrete connection
\begin{align}
    {\connectionUnitary}_{\ij} = e^{\imath\alpha_{\ij}}\connectionUnitary^0_{\ij}
    \label{eq:CurvatureTransformation}
\end{align}
is then compatible with the prescribed curvature 2-form \({\Omega}\). If we concentrate all of the prescribed curvature on a finite number of points this is the trivial connections algorithm of~\citet{Crane:2010:TCD}. We use this procedure in \cref{sec:ConstructingSurfaceConnections} to construct a connection with the correct curvature on each input surface, and in \secref{Initialization} to compute an initialization from a noisy correspondence.

\subsubsection{Finite Element Space} \label{sec:FiniteElementSpace}
To define continuous maps between triangle meshes, we need to be able to evaluate a section \(z\) not only at mesh vertices, but also inside triangles. To do so, we use the finite element space for complex line bundles on surfaces constructed by \citet{Knoppel:2016:CLB} and \citet{Liu:2016:Connection}. In our setting, the basis function for vertex \(i\) within triangle \(\ijk\) can be expressed as:
\begin{equation}
    \phi_i(p) = b_i(p) \exp\left[-\imath \left(\int_{p_i \rightarrow p} \rho_{\ijk} \right) \right],
    \label{eq:TriangleInterpolant}
\end{equation}
\begin{wrapfigure}{r}{53pt}
    \includegraphics{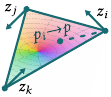}
\end{wrapfigure}
where \(b_i(p)\) is the barycentric coordinate of point \(p\) with respect to vertex \(i\), \(\rho_{\ijk}\) is the connection 1-form interpolated using Whitney interpolation~\cite{whitney1957geometric,Desbrun:2005:DEC}, and \(p_i \to p\) is the straight line connecting \(i\) to \(p\).

Using this finite element space on a triangle mesh \(S\), \citet[\S 6.1.1]{knoppel2013globally} give discretizations of the connection Laplacian \(\smash{L^\nabla_S \in \CC^{V \times V}}\) and the mass matrix \(\smash{M^\nabla_S \in \CC^{V \times V}}\), which depend on the parallel transport maps \(\connectionUnitary_\ij\) and face curvatures \(\Omega_\ijk\). We describe the full constructions in the supplementary material.

\begin{figure*}
\includegraphics[width=\linewidth]{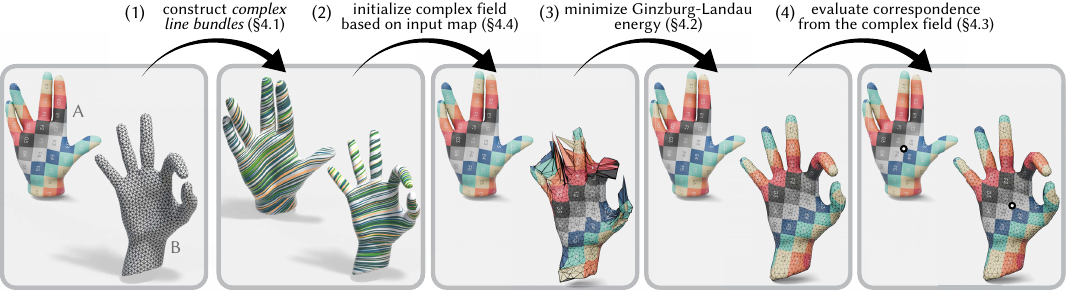}
\caption{Our algorithm proceeds in four steps. We begin with a pair of meshes \(A\) and \(B\) \figloc{(left)}. In step (1), we compute \emph{complex line bundle} structures on \(A\) and \(B\), which together define a complex line bundle structure on the product space \(A \times B\) (\cref{sec:ConstructingSurfaceConnections}). In step (2), we initialize our complex field based on a (potentially low quality) input map (\cref{sec:Initialization}). In step (3) we optimize the map by minimizing the discrete Ginzburg-Landau energy of our field (\cref{sec:DiscreteGinzburgLandauMinimization}). And finally in step (4) we evaluate the mapping at any desired points by identifying the zero set of \(z\) when interpolated via finite elements.}\label{fig:Pipeline}
\end{figure*}

\subsubsection{Product Space Finite Elements} \label{sec:ProductSpaceFiniteElements}
Recall that the complex field \(z: A \times B \to \mathbb{C}\) is defined at each vertex and stored as a complex matrix $Z$ of dimensions \(|V_A| \times |V_B|\). In order to interpolate these values over the 4-cells which make up the product space \(A \times B\), we define basis functions by taking tensor products of the basis functions \(\phi^A\) and \(\phi^B\) defined on surfaces \(A\) and \(B\) respectively.

Any 4-cell \(\smash{\sigma^4_{\ijk,abc} \in F_{A \times B}}\) is the product of triangles \(\ijk \in F_A\) and \(abc \in F_B\). The interpolated field at a point \((p, q) \in \sigma^4_{\ijk,abc}\) is:
\begin{equation}
    z(p, q) = \sum_{\substack{u \in \{i,j,k\} \\ v \in \{a,b,c\}}} Z_{u,v} \, \phi^A_u(p) \, \phi^B_v(q).
\end{equation}
Note that if we fix \(q\) and allow \(p\) to vary, we recover the standard surface interpolant on \(A\) up to a global rotation.

This interpolation scheme enables the definition of a connection Laplacian and the precise localization of zeros within each cell.

\subsubsection{Connection Laplacian}
\label{sec:ConnectionLaplacian}

We define a connection Laplacian on the product space by discretizing the Dirichlet energy (\cref{eq:DirichletEnergyDecomposition}). The discrete Dirichlet energy of a discrete section \(Z \in \mathbb{C}^{V_A \times V_B}\) on \(A\times B\) can be written in terms of left and right multiplication by the connection Laplacian and mass matrix defined on each manifold:
\begin{equation}
    \DirichletEnergy(Z) = \frac{1}{2} \langle Z, L_A^\nabla Z (M_B^\nabla)^\top \rangle_\CC + \frac{1}{2} \langle Z, M_A^\nabla Z (L_B^\nabla)^\top \rangle_\CC.
\end{equation}
Here we write \(\bullet^\top\) for the ordinary transpose, \(\bullet^\dagger\) for the conjugate transpose, and use the inner product \(\langle A, B\rangle_\CC = \operatorname{Re}\tr[A^\dagger B]\) for complex matrices. For a full derivation, see \cref{sec:TheProductSpaceDirichletEnergy}.

Similarly, the connection mass matrices \(M_A^\nabla \in \CC^{V_A \times V_A}, M_B^\nabla \in \CC^{V_B \times V_B}\) determine the \(L^2\) energy of product space sections via:
\begin{equation}
    \int_{A \times B} |z|^2 \vol_{A \times B} = \langle Z, M_A^\nabla Z (M_B^\nabla)^\top \rangle_\CC,
\end{equation}
and the scalar mass matrices \(M_A \in \RR^{V_A \times V_A}, M_B \in \RR^{V_B \times V_B}\) determine the \(L^2\) energy of real functions \(F \in \RR^{V_A \times V_B}\) on \(A \times B\):
\begin{equation}
    \int_{A \times B} f(a,b)^2 \vol_{A \times B} = \langle F, M_A F (M_B)^\top \rangle_\RR .
\end{equation}
% For simplicity, we take \(M_A\) and \(M_B\) to be the diagonal lumped mass matrices for triangle meshes \(A\) and \(B\).

\subsubsection{Encoding the Correct Homology Class}
\label{sec:EncodingTheCorrectHomologyClass}
In order to compute bijections between \(A\) and \(B\), we must prescribe the right curvature on the product space \(A \times B\).
Recall that the map \(\varphi: A \to B\) is defined implicitly: for any vertex \(i \in V_A\), the image \(\varphi(i)\) is the zero of the section \(z\) restricted to the slice \(\{i\} \times B\). This defines a section \(z^{(i)}\) on \(B\) where \(z^{(i)}(p) = z(i, p)\) for any point \(p \in B\).
Since vertex \(i\) should map to a single point on \(B\), this restricted section must contain exactly one zero.
Recalling that the total number of signed zeros on a the slice is \(\tfrac{1}{2\pi} \int \Omega\) (\cref{sec:ImplicitRepresentationViaComplexLineBundles}), this implies the curvature of the bundle restricted to any copy of \(B\) must integrate to \(2\pi\).
By symmetry, the curvature of the bundle restricted to any slice on \(A\) must also integrate to \(2\pi\).
To satisfy these constraints we start with two separate base connections on \(A\) and \(B\), with curvatures \(\Omega^A\) and \(\Omega^B\) each integrating to \(2\pi\), and we combine them to form a product connection on \(A \times B\).
The curvature associated to this resulting connection restricts to \(\Omega^A\) on any horizontal slice and to \(\Omega^B\) on any vertical slice, as desired.
Formally, when \(A\) and \(B\) are topological spheres, this product connection and curvature precisely encode the diagonal class \([\Delta]\), which contains all bijections (\cref{sec:BijectiveCorrespondencesAndHomology}).

\section{Algorithm}
\label{sec:Algorithm}
As illustrated in \cref{fig:Pipeline}, our algorithm for computing correspondences has four key steps:
\begin{enumerate}[leftmargin={5mm}]
    \item Design a discrete connection on \(A \times B\) such that each slice \(A\) and \(B\) of the product mesh contains one zero (\cref{sec:ConstructingSurfaceConnections}).
    \item Initialize the complex field \(Z\) from an input map (\cref{sec:Initialization}).
    \item Minimize the discrete Ginzburg–Landau energy of \(Z\) (\cref{sec:DiscreteGinzburgLandauMinimization}).
    \item Evaluate the correspondences by identifying the zero set of \(z\) after interpolation using a finite element basis (\cref{sec:EvaluatingTheCorrespondence}).
\end{enumerate}

To extend the algorithm's applicability, we introduce several straightforward modifications: support for surfaces with boundaries (\cref{sec:SurfacesWithBoundary}), support for point or curve landmarks (\cref{sec:Landmarks}), and a simple coarse-to-fine acceleration strategy (\cref{sec:MultiresolutionHierarchy}).

See the supplemental material for comprehensive pseudocode.

\paragraph{Notation}
Throughout, \(A = (V_A,E_A,F_A)\) and \(B=(V_B,E_B,F_B)\) are closed genus zero triangle meshes, with edge lengths \(\ell_{A}:E_{A}\to \mathbb{R}\) and \(\ell_{B}:E_{B}\to\mathbb{R}\). We write the discrete section on \(A \times B\) as a matrix \(Z \in \CC^{V_A \times V_B}\). We write \(\bullet^\top\) for the ordinary transpose, \(\bullet^\dagger\) for the conjugate transpose, and use the inner products \(\langle A, B\rangle_\CC = \operatorname{Re}\tr[A^\dagger B]\) for complex matrices, and \(\tr[A^\top B]\) for real matrices.
On each mesh \(S\) we let \(\smash{L^\nabla_S, M^\nabla_S \in \CC^{V_S \times V_S}}\) be the connection Laplacian and mass matrix for complex sections on \(S\) (\cref{sec:ConnectionLaplacian}), and let \(\smash{M_S \in \RR^{V_S \times V_S}}\) be the real diagonal mass matrix for scalar functions.

\subsection{Constructing Surface Connections}
\label{sec:ConstructingSurfaceConnections}
We begin by finding connections on \(A\) and \(B\) with total curvature \(2\pi\), which will allow us to find sections on \(A\) or \(B\) with a single zero---see \cref{sec:EncodingTheCorrectHomologyClass} for more discussion of the curvature constraint.

We can construct such a connection on each mesh by taking advantage of the \(2\pi\) ambiguity in the curvature of a discrete complex line bundle. We start with the trivial bundle \(\CC_i := \CC\) and trivial parallel transport maps \(\smash{\connectionUnitary^0_{\ij}:=1}\). We set the initial curvature \(\Omega^0\) to zero everywhere except on an arbitrarily chosen face \(f\in F\) where we set \(\smash{\Omega^0_{f}:=2\pi}\). This is a discretization of the skyscraper bundle~\cite[\S 7.2]{Knoppel:2020:RS}, which has the correct total curvature, but contains no geometric information. To obtain a more meaningful connection, we solve a linear system as in \cref{sec:TrivialConnections} to obtain a curvature of half the Gaussian curvature \(\smash{\Omega^{LC}_\ijk}\) on each face (\cref{eq:TangentBundleLeviCivitaCurvatures}), \ie{} we solve for a new connection \(\connectionUnitary_{\ij}\) with curvature \(\smash{\Omega_\ijk = \tfrac{1}{2} \Omega^{LC}_\ijk}\).

Together, these connections on \(A\) and \(B\) implicitly define a product connection on \(A \times B\). Fortunately, as we will see below, all of the relevant expressions factor into terms depending only on \(A\) or \(B\), so we do not have to assemble this full product-space connection.

\begin{remark}
    Any connection with the same curvature can in principle be used instead. In~\cref{app:DiscreteSurfaceConnections} we give two alternative constructions more directly related to tangent vectors and the Levi-Civita connection.
\end{remark}

\subsection{Discrete Ginzburg-Landau Minimization}
\label{sec:DiscreteGinzburgLandauMinimization}
We discretize the Ginzburg-Landau energy, \cref{eq:GinzburgLandau}, via a finite-element Dirichlet energy and a lumped discretization of the circular well penalty term. This energy can be minimized with L-BFGS using the expressions for the energy and gradient provided below.

\paragraph{Dirichlet Energy}
As explained in \cref{sec:ConnectionLaplacian}, the Dirichlet energy on \(A \times B\) can be assembled from the connection Laplacians and mass matrices of \(A\) and \(B\) as follows
\begin{equation}
\DirichletEnergy(Z) = \tfrac{1}{2}\left\langle Z \overline{M}^\nabla_B, L^\nabla_A Z \right\rangle_\CC + \tfrac{1}{2}\left\langle \overline{M}^\nabla_A Z,  Z L^\nabla_B\right\rangle_\CC.
\label{eq:ProductSpaceDirichletEnergy}
\end{equation}

\paragraph{Circular Well Potential}
We discretize the penalty term at the vertices, weighted by their barycentric dual volumes in \(A\times B\):
\begin{equation}
    \CircularWellPotential(Z) = \sum_{(i,j)\in V_A\times V_B} (1-|Z_{i,j}|^2)^2 M_{i,i}M_{j,j}.
\end{equation}
To write \(\CircularWellPotential\) in matrix notation, we define the norm deviation matrix \(U \in \RR^{V_A \times V_B}\) by \({U_{i,j} = \left|Z_{i,j}\right|^2 - 1}\), so that \(\CircularWellPotential(Z) = \left\langle U, M_A U M_B^\top\right\rangle_\RR\). This vertex-based discretization minimizes the stencil size---and computational cost---of function and gradient evaluations.

\paragraph{Ginzburg-Landau Energy}

The discrete energy of a section \(Z\) is
\begin{equation}
\begin{aligned}
    \DiscreteGinzburgLandau_\lambda(Z) &:= \DirichletEnergy(Z) + \tfrac{\lambda}{4}\CircularWellPotential(Z) \\
    & = \tfrac{1}{2}\left\langle Z\overline{M}^\nabla_B, L^\nabla_A Z \right\rangle_\CC + \tfrac{1}{2}\left\langle \overline{M}^\nabla_A Z, ZL^\nabla_B \right\rangle_\CC \\
    &\quad+ \tfrac{\lambda}{4} \left\langle U M_B, M_A U\right\rangle_\RR,
\end{aligned}
\label{eq:DiscreteGL}
\end{equation}
where the variable \(\lambda\) is called the \emph{Ginzburg-Landau parameter}.
Its gradient can also be written as the following matrix in \(\CC^{V_A \times V_B}\):
\begin{equation}
\nabla_Z \DiscreteGinzburgLandau_\lambda = L^\nabla_A Z \left(M^\nabla_B\right)^\top + M^\nabla_A Z \left(L^\nabla_B\right)^\top + \lambda \left(M_A U M_B^\top\right) \odot Z,
\label{eq:DiscreteGLGrad}
\end{equation}
where \(\odot\) denotes the element-wise product.

The choice of Ginzburg-Landau parameter significantly influences the quality of the correspondence. If \(\lambda\) is below the minimal eigenvalue of $L^{\nabla}_{A\times B}$ then the only critical point of the energy is $z\equiv 0$ (\cref{lem:global_mini_GL}). But if $\lambda$ is too large, the circular well potential repels zeros away from the vertices. And since the Dirichlet term carries a comparatively low weight, the field $z$ tends to remain close to its initial value during optimization.

% One simple strategy is to set \(\lambda\) based on the smallest eigenvalues \(\lambda_A, \lambda_B\) of \(L^\nabla_A\) and \(L^\nabla_B\), as \(\lambda_A + \lambda_B\) provides a tight approximation on the smallest eigenvalue of \(L^\nabla_{A \times B}\). Experimentally, we find $\lambda\approx 100 (\lambda_A + \lambda_B)$ provides near-optimal results across a variety of meshes (\cref{fig:OptimalLambda}).

One simple strategy is to set \(\lambda\) relative to the smallest eigenvalue $\lambda_{A \times B}$ of \(L^\nabla_{A \times B}\). In the smooth setting, this eigenvalue decomposes as \(\lambda_{A \times B} = \lambda_A + \lambda_B\), where \(\lambda_A\) and \(\lambda_B\) are the smallest eigenvalues of the connection Laplacians on \(A\) and \(B\).
While this equality only holds approximately in the discrete setting due to differing mass matrix discretizations (see \cref{sec:TheGinzburgLandauParameter} for details), it remains a good  estimate. Experimentally, we find $\lambda\approx 100 \lambda_{A\times B}$ provides near-optimal results across a variety of meshes (\cref{fig:OptimalLambda}).

For challenging initializations, starting from a smaller value of \(\lambda\) is often necessary to smooth the initial correspondences and match surface features correctly. For example, in \cref{fig:ComparisonInitialization,fig:FmapsUntangling}, the optimization is performed in two stages: first, with \(\lambda = 10 \lambda_{A\times B}\) to align large-scale features, and second, with \(\lambda = 100 \lambda_{A\times B}\) to refine the correspondences and address high-frequency matching.

\begin{figure}[!htpb]
    \includegraphics{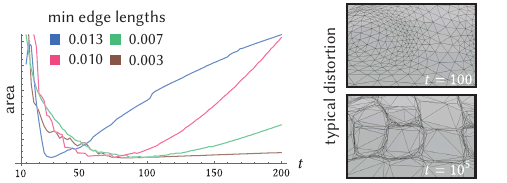}
    \caption{An experimental search for the optimal choice of Ginzburg-Landau parameter \(\lambda = t\,\lambda_0\) where \(\lambda_0 = \lambda_A + \lambda_B\).
    Since the area depends on the distortion between the surfaces, to consistently analyze the optimal parameter choice we normalize the area by the maximal area measured for each example separately.
    Large values of \(\lambda\) cannot produce smooth correspondences unless the mesh has a sufficiently high resolution, as indicated by having a small minimum edge length. \label{fig:OptimalLambda}}
\end{figure}

\subsection{Evaluating the Correspondence}
\label{sec:EvaluatingTheCorrespondence}

\begin{figure}
\includegraphics{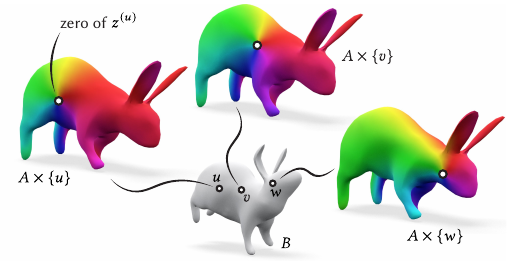}
\caption{To evaluate the map at a vertex \(u \in V_B\), we take the \(u\)'th column of \(Z\) as a section \(z^{(u)}\) on \(A\). We can visualize this section as a complex function in a chart that covers all but a single face of \(A\). Now the image \(\varphi(u)\) of vertex \(u\) is given by the location of the zero on \(A\).}
\label{fig:LocatingZeros}
\end{figure}

By construction (\cref{sec:ConstructingSurfaceConnections}), for any vertex \(v \in V_B\), the \(v\)-th column of \(Z\) is a section \(z^{(v)}\) on \(A\)--- simply denoted $z$ below---with a single zero (see \cref{fig:LocatingZeros}).
To locate this zero, we first compute the integer-valued index 2-form \(\ind^{z}\) using \cref{eq:IndexForm}. This form takes a non-zero value on exactly one face \(\ijk \in F_A\), where the zero lies.

We then compute the barycentric coordinates \((b_j, b_k)\) of the zero within \(\ijk\) (\cref{sec:FiniteElementSpace}) by solving the following system:
\begin{equation}
    (1-b_j-b_k) |\,z_i| + b_j |\,z_j| e^{\imath\left(b_k \Omega + \fieldRotationAngle_{\ij}\right)} + b_k |\,z_k| e^{-\imath\left(b_j \Omega + \fieldRotationAngle_{\ki}\right)} = 0.
    \label{eq:zero_system}
\end{equation}

\begin{figure}
\includegraphics{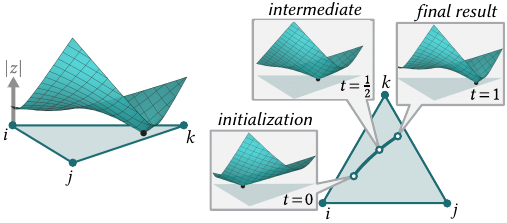}
\caption{\figloc{Left:} In order to evaluate the correspondence, we have to find a root of the interpolated section \(z\) within some triangle \(\ijk\). But the interpolant is nonlinear and non-convex, so solving directly with Newton's method might not find the desired root. \figloc{Right:} Instead, we interpolate the face curvature from flat at \(t=0\)---in which case the interpolant becomes linear---to the full curvature \(\Omega_\ijk\) at \(t=1\), solving with Newton's method at each time step.}\label{fig:ZeroInterpolation}
% if you want to edit this figure, the Mathematica source code is in /Media/TriangleZero
\end{figure}
Directly solving \cref{eq:zero_system} with Newton's method is unlikely to yield the solution within triangle $\ijk$. Instead, we adopt a homotopy continuation approach, and deform the geometry from a flat triangle (\(t\!=\!0\)) to the target curvature \(\Omega_{\ijk}\) (\(t\!=\!1\)), tracking the zero as it moves.
We use Newton's method to find the zero at each time step (\cref{fig:ZeroInterpolation}). Pseudocode is provided in the supplementary material.

We stop the zero from ``escaping'' by ensuring that the zero never hits an edge while interpolating, \ie{} $\fieldRotationAngle_{\ij}^z(t), \fieldRotationAngle_{\jk}^z(t), \fieldRotationAngle_{\ki}^z(t)\!\in\!(-\pi, \pi)$ for $t \in [0, 1)$. This is achieved via the following linear interpolation:
\begin{align}
    & \fieldRotationAngle_{\ij}^z(t) = \fieldRotationAngle_{\ij}^z + (1-t) \tfrac{1}{3} \left( \Omega_{\ijk} - 2\fieldRotationAngle_{\ij}^z + \fieldRotationAngle_{\jk}^z + \fieldRotationAngle_{\ki} \right) \\
    & \fieldRotationAngle_{\jk}^z(t) = \fieldRotationAngle_{\jk}^z + (1-t) \tfrac{1}{3} \left( \Omega_{\ijk} + \fieldRotationAngle_{\ij}^z - 2\fieldRotationAngle_{\jk}^z + \fieldRotationAngle_{\ki} \right) \\
    & \fieldRotationAngle_{\ki}^z(t) = \fieldRotationAngle_{\ki} ^z+ (1-t) \tfrac{1}{3} \left( \Omega_{\ijk} + \fieldRotationAngle_{\ij}^z + \fieldRotationAngle_{\jk}^z - 2\fieldRotationAngle_{\ki}^z \right) \\
    & \Omega_{\ijk}(t) = t \Omega_{\ijk}
\end{align}

One can check that $\fieldRotationAngle_{\ij}^z(t) \in (-\pi, \pi)$ for $0 \leq t < 1$ and that the index stays fixed, so the zero must remain inside triangle $\ijk$:
\begin{align}
    \forall t, \quad \fieldRotationAngle_{\ij}^z(t) + \fieldRotationAngle_{\jk}^z(t) + \fieldRotationAngle_{\ki}^z(t) + \Omega_{\ijk}(t) = 2\pi \ind^z_{\ijk}.
\end{align}

\paragraph{Edge-Edge Intersections}
The section encodes not only vertex locations, but also the entire embedding of the edges (\cref{fig:Teaser,fig:ZeroPlacement}). The intersection of an edge of \(A\) and an edge of \(B\) may be calculated by finding a zero inside an ``edge-edge'' face of the product space. Since these faces have zero curvature, the zero can be found by solving a single quadratic equation, which is presented in \cref{sec:EdgeEdgeIntersections}.

\paragraph{Evaluation Inside a Face} We can also evaluate the correspondence for any point \(p\) within triangle \(\ijk \in F_B\) using the basis functions in \cref{sec:FiniteElementSpace}.
Using the complex weights \(\phi_\bullet(p)\) for each vertex \(i, j, k\) of \(B\) given by \cref{eq:TriangleInterpolant}, we define a section \(z^{(p)}\) on \(A\) as the linear combination of the columns of \(Z\) corresponding to vertices \(i, j, k\). Solving for its zero as above yields the image of \(p\) on \(A\).

\begin{figure}[!htpb]
    \includegraphics{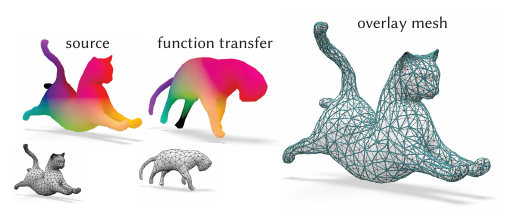}
    \caption{Our algorithm produces bijective and continuous mappings even when the meshes exhibit substantially different connectivity and sampling. The zero set of the field may place multiple vertices of \(A\) within the same face of mesh \(B\), while preserving low distortion and bijectivity.\label{fig:ZeroPlacement}}
\end{figure}

\subsection{Initialization}
\label{sec:Initialization}
Our method can start from a (potentially noisy) input function given as vertex-face maps \(\varphi : V_A \to F_B\) and \(\psi : V_B \to F_A\). To use these in our algorithm, we must convert them into an initial complex section \(z : A \times B \to \CC\) whose zero set approximates the graphs of \(\varphi\) and \(\psi\).

\paragraph{Strategy}
We use the principle that low-energy states of a connection Laplacian concentrate their zeros in regions of high curvature (\cref{sec:ImplicitRepresentationViaComplexLineBundles}). In particular, we construct a connection \(\connectionUnitary^{\varphi, \psi}\) on \(A \times B\) that concentrates all its curvature exactly along the graphs of \(\varphi\) and \(\psi\). We then take our initial section \(z\) to be the smallest eigenvector of the associated Laplacian. As this bundle lies in the same topological class as the bundle from \cref{sec:ConstructingSurfaceConnections}, this section serves as a valid initialization for the following Ginzburg-Landau minimization.

\paragraph{Construction}
We build \(\connectionUnitary^{\varphi, \psi}\) slice-by-slice using the trivial connections algorithm from \cref{sec:TrivialConnections}. For each vertex \(i_{A}\!\in\!V_A\), we compute a connection \(\connectionUnitary^{i_A}\) on \(B\) that concentrates \(2\pi\) curvature into the single face \(\varphi(i_A)\). Then, for every edge of the form \(i_A \times e_B\) in the product space, we assign \(\smash{\connectionUnitary^{\varphi, \psi}_{i_A \times e_B}} = \connectionUnitary^{i_A}_{e_B}\) Symmetrically, for each \(i_{B}\!\in\!V_B\), we run trivial connections on \(A\), concentrating curvature in face \(\psi(i_B)\) and assign this connection to product space edges of the form \(e_A \times i_B\). This fully defines a connection on the product space whose curvature is concentrated to the graph of the input maps. The smallest eigenvector of the associated Laplacian acts as a good representation of the input correspondences (see \cref{fig:NnInit}).

\paragraph{Laplacian}
The above connection does not factor as the product of independent connections on \(A\) and \(B\), so it would normally require assembling the full four-dimensional finite element Laplacian. Instead, we consider a simpler discretization of the Dirichlet energy using a lumped discretization for the outer integrals of \cref{eq:DirichletEnergyDecomposition} while retaining a finite-element discretization for the inner integrals over the 2D slices. Explicitly, for every vertex \(i_{A}\!\in\!V_A\),  let \(\smash{L^{\nabla,i_{A}}_{B}}\) be the finite element Laplacian  of \(\connectionUnitary^{i_{A}}\) on \(B\), and let \(z^{(i_A)}\) be the restriction of section \(Z\) to the slice. Using similar notation for the slices fixed at \(i_B\), we write the Dirichlet energy of the connection \(\connectionUnitary^{\varphi,\psi}\) as:
\begin{equation}
\begin{aligned}
    \DirichletEnergy_{\varphi,\psi}(Z) & = \sum_{v \in V_A}(M_A)_{v,v} \langle L^{\nabla,v}_{B} z^{(v)}, z^{(v)}\rangle_{\CC} \\
    & \qquad + \sum_{v \in V_B}(M_B)_{v,v} \langle L^{\nabla,v}_{A} z^{(v)}, z^{(v)}\rangle_{\CC},
\end{aligned}
\label{eq:SlicewiseLaplacian}
\end{equation}
and so the blocks in the product-space connection Laplacian are given by the surface connection Laplacians on the individual slices weighted by the dual area of the vertex on the complementary mesh.

The smallest eigenvalue can then be computed via an iterative matrix-free solver such as LOBPCG~\cite{Knyazev:2007:BLO}.

\paragraph{Initialization from Distributions}
Rather than starting from a map \(\varphi : V_A \to F_B\), one can start from \emph{distributions} on \(B\) associated to the vertices of \(A\), provided \eg{} by a functional map or optimal transport plan. The procedure is almost exactly the same: the only change is that for each vertex of \(A\) we spread curvature over \(B\) proportional to its density instead of concentrating all curvature in one face \(\varphi(i_A)\).

\begin{figure}[!htpb]
    \includegraphics{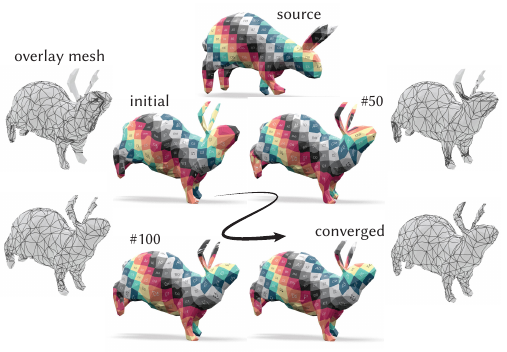}
    \caption{High quality correspondences can be computed even on very coarse tessellations. On this example, with \(|V_A|=474\) and \(|V_B|=502\), the solution is computed in 21 seconds. The robustness to tessellation density is a consequence of the finite-element representation of the implicit surface: the complex section \(z\) smoothly extends to the interior of the faces and does not require fine sampling to represent a continuous bijection. \label{fig:LowResolution}}
\end{figure}

\subsection{Multiresolution Hierarchy}
\label{sec:MultiresolutionHierarchy}
Our method can be trivially accelerated using a multiresolution hierarchy of meshes. We first compute an optimal complex section \(z^{(0)}\) on coarsened meshes \(A^{(0)}\) and \(B^{(0)}\) by minimizing the Ginzburg-Landau energy.
To transfer this solution to fine meshes \(A\) and \(B\), we distinguish between two cases depending on the bundle structure.

\paragraph{Direct Upsampling}
For the surface connections constructed in \cref{sec:ConstructingSurfaceConnections}, the fibers \(\CC_i\) are globally identified with the complex plane. This allows us to transfer the complex values of the section directly without explicit change of basis (see \cref{fig:MultiresolutionSection}). We upsample \(z^{(0)}\) by closest point projection: for every vertex pair \((i_A, i_B)\) in the fine product mesh \(A \times B\), we find the closest geometric points \(p_A \in A^{(0)}\) and \(p_B \in B^{(0)}\) and evaluate \(z^{(0)}\) at \((p_A, p_B)\), using interpolation defined in \cref{sec:ProductSpaceFiniteElements}. This upsampled section is then used as initialization for the fine-scale Ginzburg-Landau minimization, or for the iterative eigenvector solver. In practice, this can provide a $12\times$ speedup in computation time (\cref{sec:LimitationsAndFutureWork}). Note that for this transfer to be as accurate as possible, the face selected in \cref{sec:ConstructingSurfaceConnections}, which concentrates the curvature in the coarse mesh, should contain the corresponding face in the fine mesh.
\begin{figure}[!tpb]
    \includegraphics{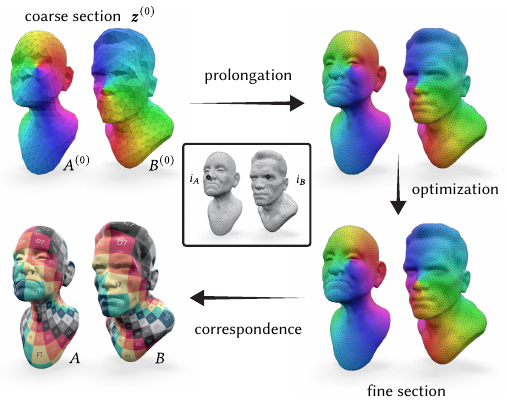}
    \caption{Our implicit representation of the map can be prolongated to a finer mesh by closest point interpolation of the complex section. Thus we can initialize our optimization from a coarse solution, which already provides a very good approximation of the desired map. To visualize the sections, we plot the complex values on the slices \(\{i_A\}\times B\) and \(A\times \{i_B\}\), and to visualize the correspondence we transfer a conformal parameterization from \(B\). \label{fig:MultiresolutionSection}}
\end{figure}

\paragraph{Geometric Initialization}
When using a more general connection, \eg{} based on tangent vector fields (\cref{app:DiscreteSurfaceConnections}), direct transfer of complex values might not be possible. In these cases, we instead transfer the map directly: we first extract the coarse correspondences \(\varphi^{(0)}\) and \(\psi^{(0)}\) from the section \(Z^{(0)}\), as described in \cref{sec:EvaluatingTheCorrespondence}. We then construct vertex-to-face maps between the fine meshes \(A\) and \(B\) by composing these coarse correspondences with closest point projections, and initialize the fine level as in \cref{sec:Initialization}.

\subsection{Landmarks}
\label{sec:Landmarks}
The Ginzburg-Landau energy can be modified to accommodate a user-specified set of landmarks \(\Landmarks = \{(l_i^A, l_i^B)\}_{i=1}^{p}\) (\cref{fig:Landmarks}). Since the map sends a point \(a \in A\) to \(b \in B\) if and only if \(z(a,b) = 0\), we can encourage this behavior by penalizing non-zero values at the landmarks. Hence, we consider the specifications as soft-constraints.

We replace the standard circular well potential with a spatially varying version, defined by a non-negative potential function \(V : A\times B\to \RR\) having isolated zeros at \(\Landmarks\):
\begin{equation}
    \CircularWellPotential(Z) := \tfrac14\int_{A\times B}(V(p) - |z(p)|^2)^2~\vol_{A\times B},
\end{equation}
We construct \(V\) using Gaussians centered at the landmarks:
\begin{equation}
    V(p) :=\hspace{-2.5mm}\min_{(p_A,p_B)\in\Landmarks}\Big(1 - \exp\big(-\!\tfrac{1}{2\sigma_A^2}d_A(p,p_A)^2 - \tfrac{1}{2\sigma_{B}^2}d_{B}(p,p_B)^2 \big) \Big),
\end{equation}
where \(\sigma_A,\sigma_B > 0\) determine the kernel width, and \(d_A(\cdot,\cdot)\) and \(d_B(\cdot,\cdot)\) denote the geodesic distance on \(A\) and \(B\), respectively. Finer granularity over the pinning potential can be achieved by specifying a landmark dependence choice of $\sigma_A$ and $\sigma_{B}$.

\begin{figure}[!htpb]
    \includegraphics{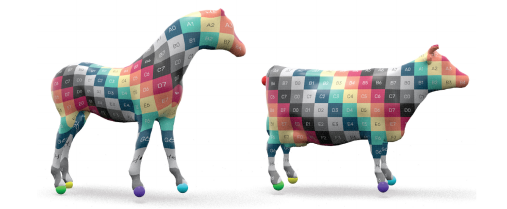}
    \caption{Landmarks are specified using a pinning potential that has local minima at the specified points in the product mesh.\label{fig:Landmarks}}
\end{figure}

This modification of the Ginzburg-Landau functional is known as \emph{singularity pinning} in the physics literature\footnote{it has been observed experimentally that the relevant singularities configurations are related to material defects that can be encoded in \(V\)}, and \citet{Aftalion:2001:PPG} proved that in the small \(\varepsilon\) limit that zeros of the Ginzburg-Landau minimizers (in a two-dimensional background) are ``pinned'' at the local minimizers of \(V\).
For this attraction to be effective in practice, the width of the Gaussian kernel must be sufficiently large. Since \(V\) modifies the energy gradient only locally, a zero far outside the well, where \(V\!\approx\! 1\), will not be affected by the potential.

The pinning potential can also be modified to support unparameterized curve to curve correspondences (\cref{fig:CurvePinning}). For a pair of curves $\gamma_{A}\subset A$ and $\gamma_{B}\subset B$ we modify $V$, and replace the point-to-point distances with point-to-curve distances $d_A(p,\gamma_{A})^2$ and $d_B(p,\gamma_{B})^2$. The resulting contribution encourages the section to vanish along the two-dimensional surface $\gamma_{A}\times \gamma_{B}\subset A\times B$.

\begin{figure}[!htpb]
    \includegraphics{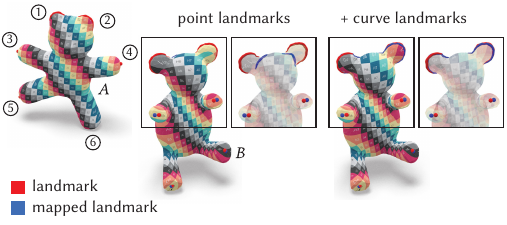}
    \caption{Enforcing curve to curve correspondences where points are allowed to slide along the specified curves is notoriously challenging. Our implicit representation can elegantly handle these as soft-constraints by modifying the circular well potential. The red points and curves are the specified landmarks, and the blue points and curves visualize their image in $B$ under the computed correspondence. \label{fig:CurvePinning}}
\end{figure}

\subsection{Surfaces with Boundary}
\label{sec:SurfacesWithBoundary}
In~\cref{fig:Boundary} we compute a map between two topological disks by filling in each boundary component to obtain topological spheres. To ensure that the boundaries are mapped correctly we also include a curve-to-curve singularity pinning potential for them. For simplicity we fill the boundaries with triangle fans, although a more isotropic mesh of the boundary disk would provide a better finite element space, and may be necessary for finely tessellated boundaries.

\begin{figure}[!htpb]
    \includegraphics{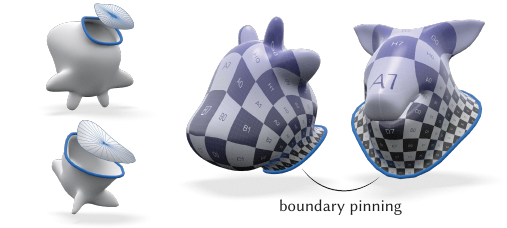}
    \caption{Our method handles surfaces with boundary by filling in a disk and pinning the boundaries together.}
    \label{fig:Boundary}
\end{figure}

\begin{figure}[!b]
    \includegraphics{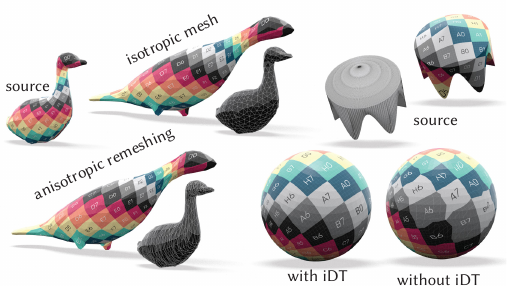}
    \caption{Our method is robust to triangulation quality, finding the same correspondence when the source is triangulated with an isotropic or a highly anisotropic mesh \figloc{(left)}. Using the intrinsic Delaunay triangulation (iDT) yields smooth correspondences even in the most extreme situations \figloc{(right)}. \label{fig:MeshIndependence}}
\end{figure}

\subsection{Intrinsic Triangulations}
\label{sec:IntrinsicTriangulations}
Since the Ginzburg-Landau functional is defined solely in terms of intrinsic quantities, we can improve accuracy by using the intrinsic Delaunay triangulation~\cite{Bobenko:2007:DLB}. While our discretization is already robust to triangulation quality and sampling density (\cref{fig:LowResolution}), using the intrinsic Delaunay triangulation produces higher quality maps even on severe examples (\cref{fig:MeshIndependence}). And intrinsic Delaunay triangulations improve the PDE-based geodesic distance approximation used to constrain landmarks. We represent intrinsic triangulations with the integer coordinates of~\citet{Gillespie:2021:ICI}, and applied this intrinsic preprocessing throughout. Intrinsic Delaunay refinement~\cite{Sharp:2019:NIT,Gillespie:2021:ICI} may offer additional improvements, especially on overly coarse inputs.

\section{Results and Evaluation}
\label{sec:Results}
In this section, we demonstrate the fundamental features and applications of our approach and compare with state-of-the-art techniques for computing bijective correspondences. Note that surface correspondences are general purpose tools, so our method has further applications beyond the basic operations we consider here (mesh transfer, surface interpolation, \etc).

\begin{figure}[!htpb]
    \includegraphics{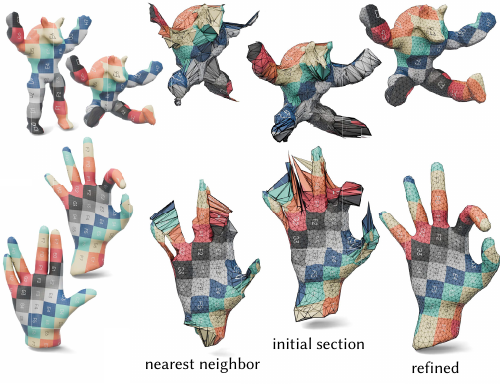}
    \caption{Correspondences obtained from closest-point initialization, visualized with texture transfer \figloc{(left)}. Our initialization algorithm finds an initial complex section whose zero set approximates the graph of the nearest neighbor correspondences. Minimizing the Ginzburg-Landau energy refines this section to produce a high-quality correspondence \figloc{(right)}. Correspondences are visualized using both geometry and texture transfer. \label{fig:NnInit}}
\end{figure}

% Motivate: no landmarks needed, just geometry
% Introduce initialization from closest point map after standard Conf(3) registration
\subsection{Landmark-Free Correspondences}
\label{sec:LandmarkFreeCorrespondences}

Many correspondence algorithms require landmarks or other user-specified constraints to guide the optimization. Our method can compute high-quality correspondences even in the absence of landmarks, relying only on the extrinsic geometry of the input surfaces.
Given two surfaces \(A\) and \(B\), we first scale and rigidly align them. Then we compute the nearest-neighbor map which assigns to each vertex \(v_A\in V_A\) the closest face of \(B\), and vice versa.

This raw nearest-neighbor map is often noisy and discontinuous, but our initialization procedure (\cref{sec:Initialization}) effectively regularizes it, naturally smoothing the correspondence. As shown in \cref{fig:NnInit}, this step alone can help to correct the rough geometric input, producing a more coherent initial map even before the energy minimization begins.
Optimizing the Ginzburg-Landau energy refines this map into a high-quality correspondence, as seen on the right of \cref{fig:NnInit}.

Due to its simplicity and effectiveness, we use this closest-point initialization on all examples (including those with landmark constraints) unless otherwise stated. \cref{fig:ZeroPlacement,fig:LowResolution,fig:MultiresolutionSection,fig:MeshIndependence,fig:NnInit} show landmark-free correspondences computed between pairs of nearly isometric shapes differing by large deformations, while \cref{fig:BiologicalData} shows landmark-free correspondences computed between similar but non-isometric biological shapes. Despite the lack of supervision, our method produces smooth, minimal-distortion maps that correctly untangle and align salient geometric features.

%\begin{figure}[!htpb]
%    \includegraphics{example-maps.pdf}
%    \caption{Landmark-free correspondences computed using a 4-level hierarchy. Computing the correspondence between the hands \figloc{(left)} took 10 minutes, and the correspondence between the arms \figloc{(right)} took 45 minutes.}  \label{fig:ExampleMaps}
%\end{figure}

% \begin{figure}[!b]
%     \centering
%     \includegraphics{Brain}
%     \caption{}
%     \label{fig:Brain}
% \end{figure}

% \begin{figure}[!b]
%     \centering
%     \includegraphics{jaw}
%     \caption{}
%     \label{fig:Jaw}
% \end{figure}

% \begin{figure}[!b]
%     \centering
%     \includegraphics{teeth}
%     \caption{}
%     \label{fig:Teeth}
% \end{figure}

\begin{figure}
\includegraphics{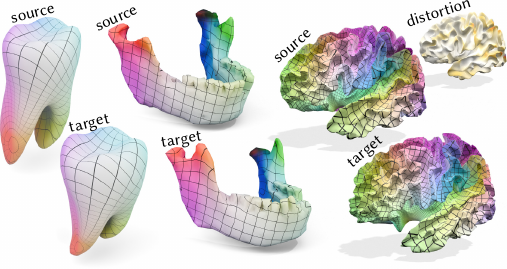}
\caption{Landmark-free correspondences computed on a variety of non-isometric biological shapes.}
\label{fig:BiologicalData}
\end{figure}

% Untangling / Robustness to Initialization
\subsection{Untangling Surface Maps}
\label{sec:UntanglingSurfaceMaps}
% As discussed in~\cref{sec:LandmarkFreeCorrespondences}, our algorithm recovers meaningful correspondences even from noisy nearest neighbor maps.
Since the Ginzburg-Landau functional is well-defined regardless of whether the section \(z\) encodes a valid bijection, our method can also repair invalid correspondences produced by other algorithms.

In particular, thanks to its structural orientation preservation, our method can correct correspondences exhibiting severe orientation reversals due to intrinsic symmetries (\cref{fig:FmapsUntangling}). We can also compute symmetric self-maps (\cref{fig:SymmetryMap}), applying an orientation-reversing constraint that effectively the identity map from the search space.

The method is also well-suited for fixing local geometric collapse: a common artifact in spectral shape matching algorithms (\eg{} functional maps), where thin structures shrink to points.
\cref{fig:FmapsCollapse} shows a typical example of a collapse from the ZoomOut algorithm~\cite{melzi2019zoomout}, appearing despite the use of initial landmarks.
In these cases, our minimization drives the zero set to expand and cover the target surface, recovering a bijection from a degenerate input.

\begin{figure}
    \includegraphics{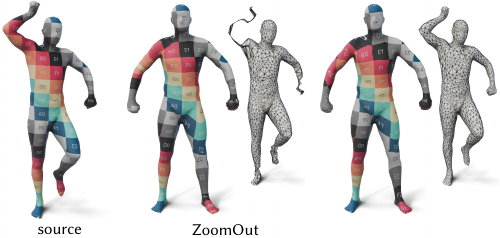}
    \caption{Spectral methods, such as Functional Maps, usually fail preserve thin shape features.
    A ZoomOut algorithm~\cite{melzi2019zoomout} initialized with landmarks produces a map that appears smooth (center). Visualizing coordinate transfer however reveals that the arms of the human shape have collapsed.
    Using this map as initialization, our algorithm (right) is able to correctly match the limbs thanks to its built-in topology and orientation preserving constraints. \label{fig:FmapsCollapse}}
\end{figure}

% Motivate: easy for us, hard for explicit representations
% Arclength param (curvilinear coordinates?) typically implies poor interior matchings
\subsection{Curve-to-Curve Correspondences}
\label{sec:CurveToCurveCorrespondences}

Like many shape correspondence algorithms, our method supports landmarks that constrain the map at isolated points (\cref{fig:Landmarks}).
But many applications require more flexible landmarks, like mapping a curve on surface \(A\) to a corresponding curve on surface \(B\) without fixing the exact pointwise map.
These constraints arise naturally when matching surfaces with sharp feature curves like boundaries or creases (\cref{fig:Boundary,fig:CurvePinning}). And they often appear in geometric morphometrics, under the name ``semilandmarks'', as curves are easier to identify than specific points~\cite{gunz2013semilandmarks}.

A na\"ive, albeit common, approach is to sample points along each curve and enforce explicit point-to-point correspondences.
However, this arbitrarily fixes the parameterization between the curves; even the ``natural'' choice of arc-length parameterization often twists the map, leading to poor interior correspondences (\cref{fig:PointCurveConstraints}).
Our implicit framework, on the other hand, handles such curve-to-curve correspondences with ease---the singularity pinning potential attracts the zeros (and therefore the mapping surface) to the two-dimensional patch \(\gamma_A \times \gamma_B \subset A \times B\) traced out by the pair of corresponding curves, without enforcing any preferential parameterization.
The correspondence between the curves then emerges according to what is energetically favorable. Importantly, the two input curves do not need to have the same number of sample points.
Finally, we note that extending this approach to mixed point-to-curve constraints is straightforward.

\begin{figure}[!htpb]
    \includegraphics{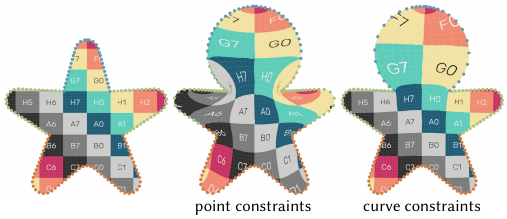}
    \caption{Enforcing curve-to-curve correspondences can be challenging, and many algorithms rely on constraining the correspondences of each point on the curve. Matching curves based on curvilinear coordinates, however, may introduce significant distortion (middle). Our framework enables the easy enforcement of curve-to-curve correspondences, allowing points to slide along the specified curves (right). This approach substantially reduces overall distortion by accurately matching the concave corners of the boundary. \label{fig:PointCurveConstraints}}
\end{figure}

% Comparisons
% Metrics: conformal distortion, isometric distortion (symmetric Dirichlet energy), area distortion |det dφ|, area of graph (this notation...)
\subsection{Comparisons with Prior Work}
\label{sec:ComparisonsWithPriorWork}

We evaluate our method against state-of-the-art approaches, checking robustness to initialization, and quality of the resulting correspondences. In particular, we compare to the constant curvature metric based inter-surface mapping (ISM) approach from~\cite{schmidt2020inter}, the adaptive triangulations (AT) method from~\cite{schmidt2023surface}, along with the reversible harmonic maps (RHM) approach of~\cite{Ezuz:2019:RHM} and hyperbolic orbifold Tutte embeddings (HOTE)~\cite{aigerman2016hyperbolic}. We additionally compare and highlight important differences of our approach with functional maps based correspondences.

\paragraph{Robustness to Initialization}
A key advantage of our approach lies in its ability to recover high-quality maps even from poor initial correspondences (\cref{sec:LandmarkFreeCorrespondences,sec:UntanglingSurfaceMaps}). In \cref{fig:ComparisonInitialization}, we replicate the initialization experiment from~\cite[Fig. 13]{schmidt2020inter}, computing correspondence from increasingly distorted initial states.
RHM, which directly minimizes the Dirichlet energy, consistently becomes trapped in local minima and fails to realign thin features or correct large distortions.
While ISM performs significantly better, it achieves consistent results only for the first three initializations before diverging.
In contrast, our method converges to a nearly identical, low-distortion map across all initializations, verifying its untangling capability discussed in \cref{sec:UntanglingSurfaceMaps}.

\begin{figure}[!tpb]
    \includegraphics{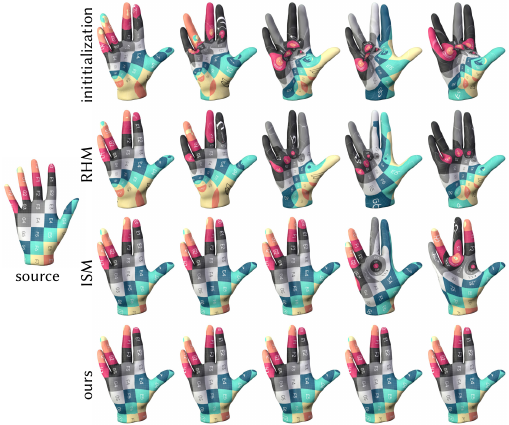}
    \caption{We replicate the experiment from~\citet{schmidt2020inter} to evaluate robustness against poor initializations (top row). The RHM method~\cite{Ezuz:2019:RHM} becomes trapped in local minima of distortion for each initialization. While \citet{schmidt2020inter} (ISM) performs better and achieves consistent results for the first three initializations, our method produces nearly identical low-distortion mappings across all initializations.
    \label{fig:ComparisonInitialization}}
\end{figure}

\paragraph{Map Quality and Distortion}
We further compare the quality of the final maps against ISM, which also encodes the overlay mesh induced by the correspondences.
As shown in \cref{fig:ComparisonInterSurfaceMaps}, while both algorithms converge to geometrically similar correspondences, our implicit Ginzburg-Landau minimization results in consistently lower distortion energies (measured by both the area of the correspondence graph and symmetric Dirichlet energy).
The difference is particularly noticeable in the smoothness of the distortion and the regularity of the texture transfer. Our method avoids the localized distortion spikes often seen in explicit remeshing-based approaches.
\begin{figure}
    \includegraphics{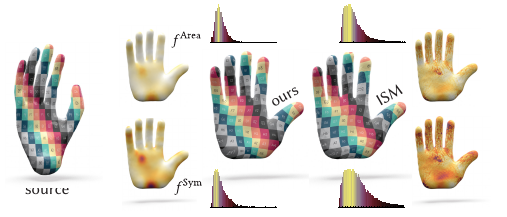}
    \caption{Both our algorithm and the algorithm of~\citet{schmidt2020inter} (ISM) represent the common subdivision of the overlay mesh induced by the correspondence. It is needed to evaluate distortion energies of the correspondence intrinsically. While both algorithms find similar correspondence, implicit area minimization produces a less distorted map, even at the finer scales. This is reflected in the smoother distortion distributions, measured both by the area of the correspondence graph given by integrating \(f^{\text{Area}}(\sigma_1,\sigma_2)=\smash{({(1+\sigma_1^2)(1+\sigma_2^2)}})^{1/2}\), and the symmetric Dirichlet energy, integrating \(f^{\text{Sym}}(\sigma_1,\sigma_2)=\smash{\sigma_1^2+\sigma_2^2+\sigma_1^{-2} + \sigma_2^{-2}}\). The histogram is colored by the height of the bars.
    \label{fig:ComparisonInterSurfaceMaps}}
\end{figure}

\begin{figure}[!tpb]
    \includegraphics{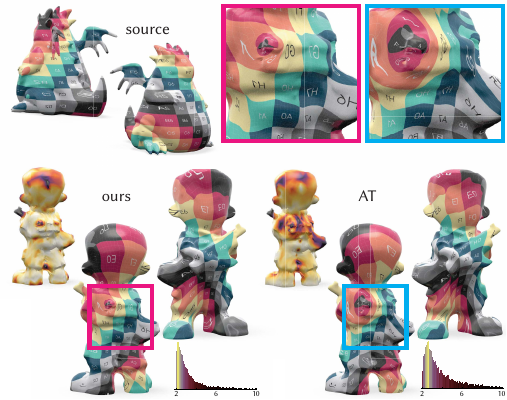}
    \caption{Distortion is inevitable between non-isometric shapes. Compared with the adaptive triangulations (AT) method~\cite{schmidt2023surface}, our method produces dramatically less distorted correspondences (as measured by the symmetric Dirichlet energy) that better respect the intrinsic symmetries of the model. \label{fig:NonIsometric}}
\end{figure}

\begin{figure}[!htpb]
    \includegraphics{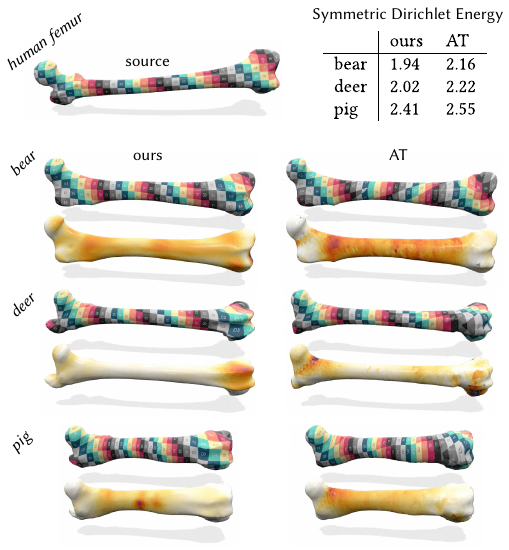}
    \caption{We use our matching algorithm to compare a human femur with the femurs of a bear, deer, and pig, each computed using four landmarks. Our method (\figloc{left}) consistently achieves lower distortion than the algorithm of~\citet{schmidt2023surface} (AT) across all three pairs. While both methods realize the same similarity ordering, the correspondence computed using AT exhibits spurious twisting artifacts that are absent in our results.
    \label{fig:FemursComparison}}
\end{figure}

We also compare the distortion to AT on a challenging pair of inputs with significant non-isometric deformation. \cref{fig:NonIsometric} shows that while AT produces a valid bijective map, our method produces a more aligned correspondence with lower symmetric Dirichlet energy. While not explicitly enforced, our method better respects the intrinsic symmetries of the shape (\eg, preserving the left-right symmetry of the human model). In the low-distortion, but non-isometric, regime important in applications like bone registration (\cref{fig:FemursComparison}) our algorithm consistently produces lower distortion correspondences that better respect the intrinsic shape symmetries.

Lastly, while our method and HOTE both produce similar correspondences away from the landmarks, HOTE introduces extremely high distortion near these distinguished points (\cref{fig:OrbifoldComparison}). By contrast, minimizing our global distortion energy produces a correspondence which extends smoothly over landmark points.

\begin{figure}[!htpb]
    \includegraphics{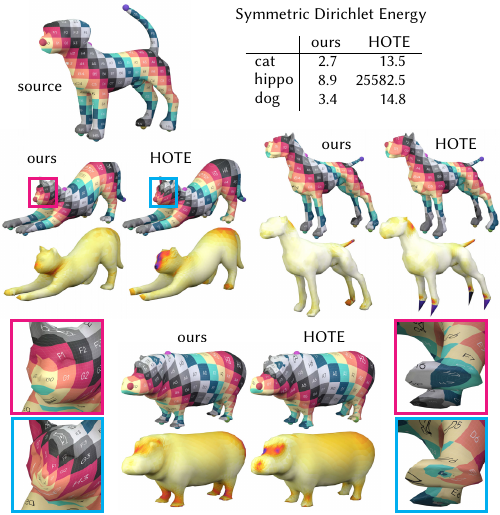}
    \caption{Our method (\figloc{left}) consistently yields lower distortion than maps produced by the hyperbolic orbifold Tutte embedding of~\citet{aigerman2016hyperbolic} (HOTE) across all three pairs. In contrast, HOTE does not minimize global mapping distortion and introduces severe artifacts near landmarks. Distortion is measured between the source mesh and the mesh obtained by mapping its vertices onto the target shape.
    \label{fig:OrbifoldComparison}}
\end{figure}

%\begin{figure}[!htpb]
%    \caption{Our algorithm can be applied to tasks such as spherical parameterization. The resulting mapping is bijective and exhibits low distortion, even when the sphere sampling is extremely coarse. Compared to other spherical parameterization methods---(a) mean curvature flow~\cite{kazhdan2012can}, (b) spherical Tutte embedding~\cite{gotsman2003fundamentals,friedel2007unconstrained}, and (c) Willmore flow~\cite{Crane:2013:RFC}---our approach yields lower isometric distortion.\label{fig:SphericalParametrization}}
%\end{figure}

%\subsection{Performance}
%\label{sec:Performance}

%\begin{figure}[!htpb]
%    \caption{The maps computed by our algorithm are constrained to preserve normal orientation of the based surface. Intrinsic algorithms (such as functional maps) often struggle to maintain orientation preservation. We initialize our algorithm with three maps obtained from the functional map representation: (a) the direct map, (b) a linear combination of the direct and inverse map and (c) the inverse map. The initial step already output a mapping with correct orientation and minimizing the GL functional lead to further distortion minimization.\label{fig:OrienationPreservation}}
%\end{figure}

\subsection{Implementation and Parameters}

We used L-BFGS to minimize the Ginzburg-Landau energy, terminating when the norm of the projected gradient fell below \(10^{-5}\) or after a maximum of 1000 iterations.
Comprehensive pseudocode can be found in the supplementary material, and C++\footnote{\url{https://github.com/yousufmsoliman/implicit-minimal-surfaces}}, MATLAB\footnote{\url{https://github.com/etcorman/implicit-minimal-surfaces}}, and Python\footnote{\url{https://github.com/RobinMagnet/implicit-minimal-surfaces}} implementations of our method are available.

The main parameter of our method is the Ginzburg-Landau parameter \(\lambda\), weighting the circular well potential. As discussed in \cref{sec:DiscreteGinzburgLandauMinimization,sec:TheGinzburgLandauParameter}, setting \(\lambda\approx 100\lambda_{A\times B}\) generally produces high-quality results.
However, we observed that examples requiring significant untangling or large deformations from the initialization benefit from a simple annealing scheme, using a smaller \(\lambda\) first to align global features before refining with the default value.
When using landmarks, we set equal widths for the Gaussian kernels, \(\sigma_A = \sigma_B\), setting them depending on the landmarks distribution.
We acknowledge that this parameter is sensitive to the specific configuration: the optimal \(\sigma\) depends both on the geometric scale and on the distribution of the landmarks. While the precise value of this parameter does not dramatically change the correspondences, we select this parameter manually to ensure the potential wells are sufficiently wide to attract the zero set.
We expect that future work on adaptive pinning potentials will allow this parameter to be determined automatically.
We refer the reader to the supplementary material for a complete list of parameters used in each figure.

\section{Limitations and Future Work}
\label{sec:LimitationsAndFutureWork}
While our algorithm offers new perspectives and significant advantages in terms of correspondence quality relative to prior work, it also faces some challenges and suggests avenues for future research.

\paragraph{Performance}
Without a multiresolution hierarchy, the runtime to compute an implicit minimal surface scales slightly superlinearly with the product of the vertex counts (\cref{fig:Performance}). On a typical pair of models with \(|V_A|\approx |V_B|\approx 500\), our algorithm takes approximately \(20\) seconds, while for a pair of models with \(|V_A|\approx |V_B|\approx 5000\) our algorithm takes approximately \(1\) hour (implemented in C++, measured on an Intel i7-14700K CPU with 64GB of RAM).
While the direct application of our method is significantly slower than remeshing based approaches such as~\citep{schmidt2023surface}, it is competitive with ISM approach of~\citep{schmidt2020inter}, which reports a runtime of about \(3\) hours when \(|V_A|\approx |V_B|\approx 4000\).
\begin{figure}[!tpb]
    \includegraphics{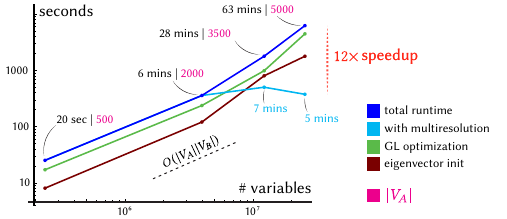}
    \caption{Computing implicit minimal surfaces scales roughly linearly in the number of variables, with the eigenvalue initialization taking a shade less than half the total runtime. By adopting a multiresolution acceleration scheme, we decrease the effective runtime significantly.\label{fig:Performance}}
\end{figure}

The runtime of our algorithm is dominated by the eigenvalue initialization and the minimization of the Ginzburg-Landau functional; the remaining operations, including the construction of surface connections and overlay mesh extraction, are negligible. For the former, we employ a na{\"i}ve LOBPCG solver, which could be significantly accelerated by incorporating algebraic or geometric multigrid preconditioners~\cite{xu2017algebraic} adapted to the tensor-product structure of the operator. Similarly, for the energy minimization, we rely on a generic off-the-shelf algorithm (L-BFGS). Using a custom solver that uses the structure of the Ginzburg-Landau energy to exploit parallelization would likely provide significant speedups. These costs are partially mitigated by our multiresolution hierarchy (\cref{sec:MultiresolutionHierarchy}), which shifts most of the computation to coarser meshes. For the example in~\cref{fig:MultiresolutionSection} (\(|V_A|\approx |V_B|\approx 5000\)) we obtained a \(12\times\) speedup and computed a distortion minimizing bijective correspondence in only \(5\) minutes.

Memory limitations place a hard constraint on the input mesh size: although the mass and stiffness matrices are never assembled in the product space, the section $z$ requires $|V_A| \times |V_B|$ storage, exceeding capacity for pairs with greater than \(\approx\)70k vertices each. By using a decimated proxy mesh to compute the correspondence there are further opportunities to decouple the runtime and memory requirements from the input resolution.

% The connection Laplacian and mass matrices factor across the two meshes and need not be assembled in the product space. However, memory limitations present a primary challenge: the section $z$ requires $|V_A| \times |V_B|$ storage---exceeding memory capacity for pairs with more than roughly 70k vertices each. These costs are partially mitigated by multiresolution hierarchies, which shift most of the computation to coarser meshes (the solution on a pair of 5k meshes in~\cref{fig:MultiresolutionSection} was computed in \(\approx 5\) minutes). Despite the simplicity of the Ginzburg-Landau functional, we have not leveraged its structure to develop a custom, potentially faster or more memory-efficient optimizer that could exploit parallelization or handle a very large number of variables more effectively. For simplicity, we used an off-the-shelf algorithm (L-BFGS) which is not optimized for very large scale problems.
\begin{figure}[!htpb]
    \includegraphics{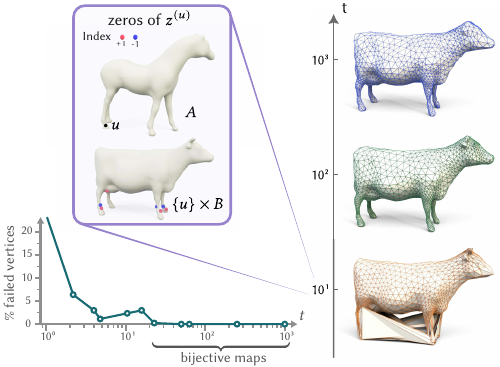}
    \caption{Coordinate transfer from the horse $A$ into the cow $B$, computed for three values of \(\lambda = t\,\lambda_0\) (right). A large value of \(\lambda\) produces clean correspondences, but the map degrades near regions of high curvature when \(\lambda\) decreases.
    The graph \emph{(bottom left)} plots the average percentage of vertices whose slices have more than one zero, thus breaking bijectivity. We obtain bijective maps once $t \gtrsim 50$.    Inspecting the failure at $t=10^1$, we find that for a vertex $u\in A$ mapped incorrectly, the slice $z^{(u)}$ on $B$ has several zeros rather than one. Their indices (red $+1$, blue $-1$) sum to $1$, as required by our construction, and one of the $+1$ zeros lie on the correct location.
    This suggests that a specific zero-selection heuristic could recover valid correspondences even in these degenerate cases.
    \label{fig:FailureMode}}
\end{figure}

\paragraph{Bijectivity}
As mentioned in \cref{sec:DiscreteGinzburgLandauMinimization}, the choice of Ginzburg-Landau parameter can have a big effect on the quality of the output map. \cref{fig:FailureMode} shows maps computed between the same pair of surfaces for a variety of parameters \(\lambda = t \lambda_0\). In general, we observe that when \(\lambda\) is not sufficiently large, the map degrades in regions of high curvature, yielding a non-bijective correspondence between the two surfaces. In practice, we find that increasing the Ginzburg-Landau parameter allows us to compute low-distortion bijective maps, but further theoretical work would be required to provide guarantees on the quality or bijectivity of the output maps.

First of all non-convexity of the Ginzburg-Landau functional means that the optimization may converge to local minima. While theoretical work suggests that critical points---beyond just global minima---can describe minimal surfaces, it remains unclear whether \emph{all} critical points correspond to such surfaces. Moreover, more analysis is needed to characterize (and constrain) when these implicit minimal surfaces are graphs over both factors (\ie{} bijections).

\paragraph{Generalization}
Extending the algorithm to general surfaces, not limited to genus zero surfaces with boundaries, is a compelling direction for future work. The topology of the implicit surfaces, encoded in the curvature of the connection, needs to be modified so that it represents surfaces that are the graphs of bijective correspondences. In addition to total curvature \(2\pi\) on each slice, additional curvature must be concentrated, according to the map homotopy type, on the closed two-dimensional surfaces corresponding to the product of homology generators. Beyond changing the topology of the implicit surfaces, additional complications arise when minimizing the Ginzburg-Landau functional that may obstruct extracting a correspondence. For instance, the energy density of critical points of the Ginzburg-Landau functional may not concentrate around a singularity as \(\varepsilon\to 0\), instead spreading out over the entire space according to a harmonic one-form.

In principle, our implicit representation can also be generalized to codimension-3 manifolds, replacing complex line bundles with rank-3 vector bundles. Looking for implicit codimension-3 minimal surfaces in the product of two volumetric domains would be the natural generalization of our method to volumetric correspondences.

\begin{wrapfigure}{r}{125pt}
    \includegraphics{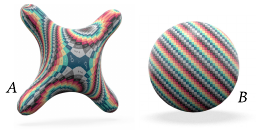}
\end{wrapfigure}
Finally, we mention that beyond bijective correspondences, our method can compute \emph{covering maps} instead of bijections. For instance, by picking a connection on \(A\) which has total curvature \(4\pi\) instead of \(2\pi\), we can compute a double covering map where every point of \(B\) is covered by two points of \(A\).

\begin{acks}
Thanks to Aria Halavati for teaching us about the complex line bundle approach to codimension-two minimal surfaces.
\end{acks}

% Bibliography
\bibliographystyle{ACM-Reference-Format}
\bibliography{minimal_ref}

\appendix

\section{Riemannian Geometry in the Product of Surfaces}
\label{sec:RiemannianGeometryInTheProductOfSurfaces}
Consider a pair of oriented smooth surfaces \(A\) and \(B\) with Riemannian metrics \(g_{A}\) and \(g_{B}\), respectively. The product space $A \times B$ is a Riemannian 4-manifold endowed with the metric $g_{A \times B}(X_A \oplus X_B, Y_A \oplus Y_B) := g_A(X_A,Y_A) + g_B(X_B,Y_B)$.
\begin{proposition}
    Given \(\varphi:A\to B\) the area of its graph \(\Sigma_{\varphi}\) can be expressed as the integral of the singular values \(\sigma_1,\sigma_2\) of the deformation \(d\varphi\):
    \[\textup{Area}(\Sigma_{\varphi}) = \int_{A}\sqrt{(1+\sigma_1^2)(1+\sigma_2^2)}~\vol_{A}.\]
    \label{prp:AreaIntegral}
\end{proposition}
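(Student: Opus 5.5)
The plan is to parametrize \(\Sigma_{\varphi}\) by \(A\) itself through the graph map \(F:A\to A\times B\), \(F(p)=(p,\varphi(p))\). This map is a smooth embedding whose image is \(\Sigma_{\varphi}\), so by the area formula
\[\textup{Area}(\Sigma_{\varphi})=\int_A \sqrt{\det\big(g_A^{-1}F^*g_{A\times B}\big)}~\vol_A.\]
The whole task therefore reduces to computing the pullback metric \(F^*g_{A\times B}\) pointwise and evaluating this determinant.

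First, the differential is \(dF_p(X)=X\oplus d\varphi_p(X)\). Using the product metric defined just above the proposition, the pullback is
\[F^*g_{A\times B}(X,Y)=g_A(X,Y)+g_B(d\varphi_p X,d\varphi_p Y)=g_A\big((\mathrm{Id}+d\varphi_p^{*}d\varphi_p)X,Y\big).\]
Here \(d\varphi_p^{*}\) denotes the adjoint of \(d\varphi_p:T_pA\to T_{\varphi(p)}B\) with respect to \(g_A\) and \(g_B\).

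Second, the operator \(d\varphi_p^{*}d\varphi_p\) is \(g_A\)-self-adjoint and positive semidefinite, and its eigenvalues are \(\sigma_1^2,\sigma_2^2\) by the definition of singular values. Diagonalize it in a \(g_A\)-orthonormal eigenbasis \(e_1,e_2\) of \(T_pA\), which the SVD of \(d\varphi_p\) provides. In that basis the Gram matrix of the pulled-back metric is \(\mathrm{diag}(1+\sigma_1^2,1+\sigma_2^2)\), so the area element of \(\Sigma_\varphi\) equals \(\sqrt{(1+\sigma_1^2)(1+\sigma_2^2)}\,\vol_A\). Integrating over \(A\) then gives the claim.

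The only delicate points are bookkeeping, not real obstacles.
\begin{itemize}
\item We must justify that \(F\) is a diffeomorphism onto \(\Sigma_\varphi\). It is injective because its first component is the identity, and it is an immersion because \(dF\) has the identity as a block. So no bijectivity of \(\varphi\) is needed.
\item We must check that the singular values are basis-independent, which holds because they are defined through the metrics \(g_A,g_B\).
\item We must fix the orientation of \(\Sigma_\varphi\) via \(F\) so that the area, which is unsigned anyway, is the stated integral.
\end{itemize}
If \(\varphi\) is only Lipschitz or piecewise smooth, as for the piecewise-linear maps the paper uses, the same computation holds almost everywhere and the area formula still applies.
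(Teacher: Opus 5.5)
Your proof is correct and is essentially the paper's argument: you parametrize $\Sigma_\varphi$ by $p\mapsto(p,\varphi(p))$, identify the pulled-back metric relative to $g_A$ as $I + g_A^{-1}\varphi^\star g_B$ (your $\mathrm{Id}+d\varphi_p^{*}d\varphi_p$), and integrate the square root of its determinant. The only cosmetic difference is that the paper evaluates the determinant with the $2\times 2$ identity $\det(I+M)=1+\operatorname{tr}M+\det M$ rather than diagonalizing in the singular-vector basis, which gives the same $1+\sigma_1^2+\sigma_2^2+\sigma_1^2\sigma_2^2$.
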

\begin{proof}
    The graph can be parameterized by $F_{\varphi} : A \rightarrow \Sigma_{\varphi}$,  $F_{\varphi}(p) := (p,\varphi(p))$, and so
    $\textup{Area}(\Sigma_{\varphi})$ can be computed by integrating the area form induced by the parameterization:
    \begin{align*}
        \textup{Area}(\Sigma_{\varphi}) &= \int_{A} \det\left( F_{\varphi}^\star g_{A \times B} \right)^{\tfrac{1}{2}} \vol_{A} \\
        &= \int_{A} \det\left( I + g_A^{-1} \varphi^\star g_B \right)^{\tfrac{1}{2}} \vol_{A} \\
        &= \int_{A} \left[ 1 + \det\left( g_A^{-1} \varphi^\star g_B \right) + \tr\left( g_A^{-1} \varphi^\star g_B \right)\right]^{\tfrac{1}{2}} \vol_{A} \\
        &= \int_{A} \left[ 1 + \sigma_1^2 \sigma_2^2 + \sigma_1^2 + \sigma_2^2\right]^{\tfrac{1}{2}} \vol_{A} .
    \end{align*}
\end{proof}

As expected the change for variables to $B$ leads to the same function applied to the inverse singular values:
\begin{align*}
   \textup{Area}(\Sigma_{\varphi}) &= \int_{B} \left[ 1 + \sigma_1^2 \sigma_2^2 + \sigma_1^2 + \sigma_2^2 \right]^{\tfrac{1}{2}} \frac{\det\left( (\varphi^{-1})^\star g_B \right)^{\tfrac{1}{2}}}{\det\left( g_B \right)^{\tfrac{1}{2}}} \vol_B \\
      &= \int_{B} \big[ 1 + \sigma_1^2 \sigma_2^2 + \sigma_1^2 + \sigma_2^2 \big]^{\tfrac{1}{2}} \frac{1}{\sigma_1 \sigma_2} \vol_B \\
      &= \int_{B} \Big[ \frac{1}{\sigma_1^2 \sigma_2^2} + 1 + \frac{1}{\sigma_2^2} + \frac{1}{\sigma_1^2} \Big]^{\tfrac{1}{2}} \vol_B .
\end{align*}

\section{The Product-Space Dirichlet Energy}
\label{sec:TheProductSpaceDirichletEnergy}
Here we discretize the smooth expression of Dirichlet energy in \cref{eq:DirichletEnergyDecomposition} to obtain a discrete Dirichlet energy on the product space.
We consider a connection \(\nabla\) that is the tensor product of connections \(\nabla^A\) and \(\nabla^B\) on \(A\) and \(B\), respectively.
For convenience, we reproduce \cref{eq:DirichletEnergyDecomposition} below:
\begin{align*}
    \DirichletEnergy(z) = \int_{A}\left(\int_{B}|\nabla z|_{B}^2~\vol_{B}\right)\vol_{A} + \int_{B}\left(\int_{A}|\nabla z|_{A}^2~\vol_{A}\right)\vol_{B}
\end{align*}
Discretizing the second term yields:
\begin{align*}
    & \int_{A \times B} |\nabla^A z(p,q)|^2 \vol_{A \times B}(p,q) \\
    & = \sum_{\substack{\ijk \in F_A \\ abc \in F_B}} \underbrace{\int_{\ijk}\int_{abc}\Big| \sum_{\substack{u \in \ijk \\ v \in abc}} Z_{u,v} \nabla^A \phi^A_u(p) \phi^B_v(q) \Big|^2 \vol_{A}(p)\vol_{B}(q)}_{X_{\ijk,abc}}.
\end{align*}
The integrand evaluates to
\begin{align*}
    X_{\ijk,abc} &= \sum_{\substack{m \in \ijk \\ n \in abc}} \bar{Z}_{m,n} \Bigg[\sum_{\substack{u \in \ijk \\ v \in abc}} \left( \int_{\ijk} (\nabla^A \phi^A_m)^\star(p) \nabla^A \phi^A_u(p) \vol_{A}(p)\right) \cdot \\ & \quad \hspace{7.5em}\left(Z_{u,v} \int_{abc} \bar{\phi}^B_n(q) \phi^B_v(q) \vol_{B}(q) \right)\Bigg],
\end{align*}
where we recognize the integrals as the components of connection Laplacian of $A$ and the mass matrix of $B$. Hence, the connection Laplacian on $A\times B$ has a tensor-product structure: \begin{equation}
          L_{A\times B}^{\nabla} = L_A^\nabla \otimes M_B^\nabla + M_A^\nabla \otimes L_B^\nabla
  . \end{equation}

Using a matrix representation $Z \in \mathbb{C}^{V_A \times V_B}$ of the complex field \(z: A \times B \to \mathbb{C}\), the discretization of Equation~\ref{eq:DirichletEnergyDecomposition} only depends on left and right multiplications of the connection Laplacian and mass matrices defined on each manifold:
\begin{equation*}
    \DirichletEnergy(Z) = \frac{1}{2} \langle Z, L_A^\nabla Z (M_B^\nabla)^\top \rangle_\CC + \frac{1}{2} \langle Z, M_A^\nabla Z (L_B^\nabla)^\top \rangle_\CC.
\end{equation*}
\section{The Ginzburg-Landau Parameter}
\label{sec:TheGinzburgLandauParameter}
For the following result, we consider a Hermitian positive-definite matrix $A\in\CC^{d\times d}$ along with a diagonal matrix $M\in \RR^{d\times d}$---for our application, they will be the connection Laplacian and mass matrix on a four-dimensional cell complex, respectively.
\begin{lemma}
    Let $\lambda_{0}>0$ be the smallest eigenvalue of the generalized eigenvalue problem $Az = \lambda M z$ . Consider the discrete energy
    \[
        e(z) = \tfrac{1}{2}\langle Az, z\rangle + \tfrac{\lambda}{4}\langle M u,u\rangle,
    \]
    where $u_i = 1-|z_i|^2$ for every vertex $i=1,\dots,d$. If $0 < \lambda \leq \lambda_0$ then the only critical point of $e$ is given by $z \equiv 0$.
    \label{lem:global_mini_GL}
\end{lemma}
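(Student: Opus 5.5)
We compute the gradient of $e$ and show that any critical point with $z\neq 0$ contradicts the variational characterization of $\lambda_0$. Since $A$ is Hermitian and $M$ is real diagonal, we treat $\CC^d$ as a real vector space with inner product $\langle x,y\rangle=\operatorname{Re}(x^\dagger y)$. Differentiating $u_i = 1-|z_i|^2$ gives $\partial u_i = -2z_i$ in this real sense. The gradient is therefore
\[
    \nabla e(z) = Az - \lambda\, M\,\mathrm{diag}(u)\, z = Az - \lambda M (u\odot z),
\]
which agrees (up to the sign convention on $u$) with \cref{eq:DiscreteGLGrad}. A critical point thus satisfies $Az = \lambda M(u\odot z)$.

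Next, pair this equation with $z$. Using that $M$ is diagonal with positive entries $m_i$ (a mass matrix), we get
\[
    \langle Az, z\rangle = \lambda\sum_i m_i (1-|z_i|^2)|z_i|^2 = \lambda \langle Mz,z\rangle - \lambda\sum_i m_i|z_i|^4 .
\]
By the Rayleigh quotient characterization of the smallest generalized eigenvalue, valid because $A$ is Hermitian positive definite and $M$ is positive diagonal, we have $\langle Az,z\rangle \ge \lambda_0\langle Mz,z\rangle$. Combining the two relations gives
\[
    (\lambda_0-\lambda)\langle Mz,z\rangle + \lambda\sum_i m_i |z_i|^4 \le 0 .
\]

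For $0<\lambda\le\lambda_0$, both terms on the left are nonnegative. So both must vanish, and in particular $\sum_i m_i|z_i|^4 = 0$. Since every $m_i>0$, this forces $z\equiv 0$.

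Conversely, $z=0$ is indeed critical, since $\nabla e(0)=A\cdot 0=0$. This proves the lemma.

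The main point needing care is the positivity of $M$, which is used both in the Rayleigh bound and in the final step. The statement only says "diagonal", so I would make the hypothesis of positive diagonal entries explicit; it holds for the lumped mass matrices used here. The other point to check is the real-gradient bookkeeping for complex variables, specifically the factor $2$ from $\partial|z_i|^2$. It cancels against the $\tfrac14$ and $\tfrac12$ prefactors. In any case it does not affect the sign argument, since the contradiction only uses that the relevant terms are nonnegative multiples of one another.
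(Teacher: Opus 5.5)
Your proof is correct and is essentially the paper's argument: pair the critical-point equation with $z$, apply the Rayleigh-quotient bound $\langle Az,z\rangle\ge\lambda_0\langle Mz,z\rangle$, and conclude that $\lambda\sum_i m_i|z_i|^4\le 0$ forces $z=0$. Your version is a bit more careful than the paper's. It uses the generalized eigenvalue rather than ``the smallest eigenvalue of $A$'', it covers the boundary case $\lambda=\lambda_0$ explicitly, and it states the positivity of $M$ that the paper uses without saying so.
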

\begin{proof}
    Taking the scalar product between a stationary point $\bar{z}$ and the gradient $\nabla e(\bar{z})$, we obtain:
    \begin{equation}
        \langle\bar{z},\nabla e(\bar{z})\rangle = \langle\bar{z}, A \bar{z}\rangle - \lambda\langle\bar{z}, M\bar{z}\rangle + \lambda \langle |\bar{z}|^2, M |\bar{z}|^2 \rangle = 0.
            %+ \lambda\sum_{i=1}^d M_{ii}|\bar{z}_i|^4 = 0.
    \end{equation}
    Using the fact that the smallest eigenvalue of $A$ is greater than $\lambda$, we conclude that $\bar{z}$ must be equal to zeros:
    \begin{equation}
        0 \leq \langle\bar{z}, A \bar{z} -\lambda M\bar{z}\rangle = - \lambda \langle |\bar{z}|^2, M |\bar{z}|^2 \rangle = 0 \leq 0 % \lambda\sum_{i=1}^d M_{ii}|\bar{z}_i|^4 \leq 0
    \end{equation}
\end{proof}
% If the mass matrix is not-diagonal, the above result still holds if we take \(\lambda_0\) to be the smallest eigenvalue of the generalized eigenvalue problem $Az=\lambda \hat{M}z$, where $\hat{M}$ is the result of lumping $M$ to the diagonal, \ie, $\hat{M}_{ii} = \sum_{j}M_{\ij}$ for $i=1,\dots,d$.
In the continuous setting, the connection Laplacian $\Delta^\nabla=\Delta_A^\nabla \otimes I_B + I_A \otimes \Delta_B^\nabla$ on $A\times B$ decomposes as a sum of connection Laplacians on each shape.
Its minimal eigenvalue is therefore the sum of the minimal eigenvalues on each shape: $\lambda_0 = \lambda_A^\nabla + \lambda_B^\nabla$, where $\lambda_A^\nabla$ and $\lambda_B^\nabla$ are the minimal eigenvalues of $\Delta_A^\nabla$ and $\Delta_B^\nabla$ respectively (defined with respect to the corresponding volume form on each shape).

In practice, however, due to differing discretization choices, we use $L_{A\times B}^{\nabla} = L_A^\nabla \otimes M_B^\nabla + M_A^\nabla \otimes L_B^\nabla$ for the Dirichlet energy, but $M = M_A \otimes M_B$ for the potential energy, using lumped mass matrices $M_A$ and $M_B$.
The generalized eigenvalue problem $Az = \lambda M z$ then doesn't strictly separate.
Motivated by the continuous setting, we approximate the stability threshold using $\lambda = \lambda_A^\nabla + \lambda_B^\nabla$ where $\lambda_A^\nabla$ and $\lambda_B^\nabla$ are the smallest eigenvalues of $L_A^\nabla$ with respect to $M_A^\nabla$, and $L_B^\nabla$ with respect to $M_B^\nabla$, respectively.
As both mass matrices approximate the same continuous volume form, we expect this value to provide a close approximation of the true stability threshold.

\section{Alternative Surface Connections}
\label{app:DiscreteSurfaceConnections}
We describe two discrete connections with curvature \(\Omega = \tfrac12\Omega^{LC}\): one based on concentrating and redistributing the curvature of the Levi-Civita connection and another based on spin structures. Below, \(M\) is a triangle mesh describing one of the two input surfaces, with $\connectionUnitary^{LC}$ and $\Omega^{LC}$ as in \cref{sec:DiscreteLeviCivitaConnection}.

While we obtained similar correspondences irrespective of the choice of connection, the available multiresolution schemes differ (\cref{sec:MultiresolutionHierarchy}). For instance, to prolongate the sections of the bundles constructed below requires keeping track of parallel transport maps across the hierarchy of meshes. Further refinements and generalizations of our approach may be possible by changing the surface connections and their curvature.

\paragraph{A Vector Field Connection}
We can modify the discrete Levi-Civita connection by computing an ``offset connection'' $\tilde{\connectionUnitary}$ so that $\connectionUnitary = \tilde{\connectionUnitary}\ \connectionUnitary^{LC}$ is compatible with $\Omega$.
Similar to our discretization of the skyscraper bundle, we first select an arbitrary face $f_0$, and define \[\tilde{\Omega} = \Omega^{LC} - 2\pi (\chi(M)-1)\delta_{f_0},\] where $\delta_{f_0}$ is the Kronecker delta on faces. By construction, $\tilde{\Omega}$ sums to $2\pi$ on $M$. Due to the \(2\pi\) ambiguity in the curvature of a discrete complex line bundle, it is also compatible with $\connectionUnitary^{LC}$. The offset connection can now be computed by solving a Poisson equation: we set $\tilde{\connectionUnitary}_{\ij} = e^{\imath\tilde{\connectionAngle}_{\ij}}$ where \(\tilde{\connectionAngle}\) solves
\begin{equation}
    \argmin_{\tilde{\connectionAngle}} \|\tilde{\connectionAngle}\|^2  \quad \text{s.t.} \quad \mathsf{d}_1 \tilde{\connectionAngle} = \Omega - \tilde{\Omega}.
\end{equation}

\paragraph{Spin Connections}
A canonical choice on genus zero surfaces is given by a discrete spin connection~\cite{Chern:2018:SFM}, which is given by computing a square-root of the Levi-Civita connection \[\connectionUnitary^{\text{spin}}_{\ij} = \pm (\connectionUnitary^{LC}_{\ij})^{1/2}.\] Unlike the other constructions we discussed, the construction does not depend on the choice of an arbitrary face. The signs of the square-root, however, must be chosen appropriately so that \(\connectionUnitary^{\text{spin}}\) is compatible with \(\Omega\). A simple spanning tree based algorithm to choosing the signs is given in~\cite[Algorithm 3]{Chern:2018:SFM}.

\section{Edge-Edge Intersections}
\label{sec:EdgeEdgeIntersections}

\begin{wrapfigure}{r}{72pt}
\includegraphics{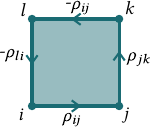}
\end{wrapfigure}
We identify the intersection between an edge of \(A\) and an edge of \(B\) by locating zeros in ``edge-edge'' faces of \(A \times B\). On such a face complex section values are interpolated using the tensor product of the basis functions along each edge, so a point \((s, t)\) in the face is a zero if and only if
\begin{equation*}
\begin{aligned}
0 &= (1-s) (1-t) e^{\imath (s \rho_\ij + t \rho_{\jk})} z_i + s (1-t) e^{\imath (-(1-s) \rho_\ij + t \rho_{\jk})} z_j\\
&\quad + s t e^{\imath (-(1-s)\rho_\ij -(1-t)\rho_\jk)} z_k + (1-s) t e^{\imath (s\rho_\ij - (1-t)\rho_\jk)} z_l.
\end{aligned}
\end{equation*}
Dividing through by \(z_i e^{\imath (s\rho_\ij + t\rho_\jk)}\), we obtain
%\begin{equation}
%\begin{aligned}
%0 &= (1-s) (1-t) + s (1-t) e^{-\imath\rho_\ij} \frac{z_j}{z_i}\\
%&\quad+ s t e^{-\imath (\rho_\ij + \rho_\jk)} \frac{z_k}{z_i} + (1-s) t e^{-\imath \rho_\jk} \frac{z_l}{z_i}.
%\end{aligned}
%\end{equation}
\begin{equation*}
\begin{aligned}
0 &= (1-s) (1-t) + s (1-t) \frac{z_j}{r_\ij z_i} + s t \frac{z_k}{r_\ij r_\jk z_i} + (1-s) t \frac{z_l}{r_\jk z_i}.
\end{aligned}
\end{equation*}
Now \(t\) is the solution to the following quadratic equation and \(s\) is the associated quotient:
% \begin{equation}
% t^2 \Im(a\bar{c}) + t \Im(a\bar{d} - c\bar{b}) + \Im(b\bar{d}) = 0 \quad s =-\frac{\Re(ct+d)}{\Re(at+b)},
% \end{equation}
\begin{equation*}
t^2 \operatorname{Im}(a\bar{c}) + t \operatorname{Im}(a\bar{d} - c\bar{b}) + \operatorname{Im}(b\bar{d}) = 0 \quad s =-\frac{\operatorname{Re}(ct+d)}{\operatorname{Re}(at+b)},
\end{equation*}
where
% \begin{align}
% & b = -1 + e^{-\imath\rho_\ij} \frac{z_j}{z_i}, \, c = -1 + e^{-\imath \rho_\jk} \frac{z_l}{z_i}, \, d = 1 \\
% & a = -1 - b - c + e^{-\imath (\rho_\ij + \rho_\jk)} \frac{z_k}{z_i}.
% \end{align}
\begin{align*}
& b = -1 + r_\jk^{-1} \frac{z_j}{z_i}, \, c = -1 + r_\jk^{-1} \frac{z_l}{z_i}, \, d = 1 \\
& a = -1 - b - c + r_\jk^{-1} r_\jk^{-1} \frac{z_k}{z_i}.
\end{align*}

% \end{document} %========= cut off supplemental material

\clearpage
\setcounter{page}{1} % restart page numbering
\setcounter{section}{0} % restart section numbering

\noindent%
{\Huge Supplemental Material}

\hspace{.5em}To \emph{Implicit Minimal Surfaces for Bijective Correspondences} by Etienne Corman, Yousuf Soliman, Robin Magnet, and Mark Gillespie

\section{Pseudocode}
\label{sec:Pseudocode}

This supplement provides detailed pseudocode for computing bijective correspondences via minimal surfaces.

Subroutines and quantities not defined in pseudocode are described in the list below.
\begin{itemize}[leftmargin=*]
    \item \(a_{\ijk}\) --- area of face \(\ijk\).
    \item \(\theta_i^{\jk}\) --- corner angle for vertex \(i\) in face \(\ijk\).
    \item \(\mathsf{d}_1 \in \RR^{F \times E}\) --- discrete exterior derivative \cite{Desbrun:2005:DEC}.
    \item \(*_1 \in \RR^{E \times E}\) --- Hodge star \cite{Desbrun:2005:DEC}.
    \item \(M_S \in \RR^{V \times V}\) --- the (scalar) vertex lumped mass matrix on triangle mesh \(S\). Note that this is different from the complex connection mass matrix \(M^\nabla_S\) computed in \cref{alg:BuildFEMConnectionMatrices}.
    \item \(d_S(p, q), d_S(p, \gamma)\) --- the geodesic distance along \(S\) from point \(p \in S\) to point \(q \in S\), or to curve \(\gamma \subset S\). (See \eg{} \cite{Crane:2017:HMD}.)
    \item \Proc{LinearSolve}(\(\Asf,\bsf\)) --- solves a sparse linear system \(\Asf\xsf=\bsf\).
    \item \Proc{MinEigenvalue}(\(\Asf,\Bsf\)) --- computes the smallest eigenvalue of the pair \(\Asf, \Bsf\), \ie{} the smallest \(\lambda\) so that exist an \(\xsf\) with \(\Asf \xsf = \lambda \Bsf \xsf\).
    \item \Proc{MinEigenvector}(\(\xsf \mapsto \Asf \xsf, \xsf \mapsto \Bsf\xsf\)) --- computes the eigenvector of the pair \(\Asf, \Bsf\) with smallest eigenvalue. For efficiency, we formulate the problem via the operators \(\Asf, \Bsf\) and avoid assembling the whole product space matrices.
    \item \Proc{LBFGS}(\((f, \nabla f), x_0\)) --- minimize \(f\) using LBFGS, starting from initial point \(x_0\), returning optimized point \(x\)
    \item \Proc{NormalizeToUnitSurfaceArea}(\(M\)) --- scale triangle mesh \(M\) so that it has surface area 1.
\end{itemize}

\paragraph{Product-space matrices} Recall from  \cref{sec:CellComplexesAndProductMeshes} that the product mesh \(A \times B\) has vertex set \(V_{A \times B} = V_A \times V_B\), and so it is convenient to represent discrete sections \(z \in \CC^{V_{A \times B}}\) by matrices \(Z \in \CC^{V_A \times V_B}\). Similarly, the product mesh has edge set \(E_{A \times B} = E_A \times V_B \cup V_A \times E_B\), so we can represent a connection \(\connectionUnitary \in \CC^{E_{A \times B}}\) by a \emph{pair} of matrices \(\connectionUnitary^{E,V} \in \CC^{E_A \times V_B}\) and \(\connectionUnitary^{V,E} \in \CC^{V_A \times E_B}\), where \(\connectionUnitary^{E,V}_{e_A, v_B}\) gives the entry for the product-space edge \(e_A \times v_B\), and \(\connectionUnitary^{V,E}_{v_A, e_B}\) gives the entry for the product-space edge \(v_A \times e_B\). We often abuse notation and write \(r_{e_A, v_B}\) for the entries of the first matrix in \(\CC^{E_A \times V_B}\), and write \(r_{v_A, e_B}\) for the entries of the second matrix in \(\CC^{V_A \times E_B}\). Finally, the product mesh has face set \(F_{A \times B} = F_A \times V_B \cup E_A \times E_B \cup V_A \times F_B\), and thus we write vectors \(\Omega \in \RR^{F_{A \times B}}\) as triplets of matrices in \(\RR^{F_A \times V_B}\), \etc{}. We write the \(i\)th row of a matrix \(M\) as \(M_{i,\bullet}\) and the \(j\)th column as \(M_{\bullet, j}\).

\newpage

\begin{algo}{\Proc{SurfaceConnection}$(S)$}
    \label{alg:SurfaceConnection}
    \begin{algorithmic}[1]
        \InputConditions{A triangle meshes \(S = (V,E,F)\) with edge lengths \(\ell\).}
        \OutputConditions{A connection $\connectionUnitary \in \CC^{E}$ and compatible curvature $\Omega\in\RR^{F}$ such that \(\sum_f \Omega_f = 2\pi\)}
        \State $\Omega_{\ijk}\gets\tfrac12(\tilde{\theta}_i^{\jk} + \tilde{\theta}_j^{\ki} + \tilde{\theta}_k^{\ij} - \pi)$ \Comment{for every face $\ijk\in F$}
        \State Pick an arbitrary face $f_0\in F$
        \State $L\gets \dsf_1*_1^{-1}\dsf_1^\top$ \Comment{2-form Laplacian on $S$}
        \State $u\gets$\Proc{LinearSolve}$(L, \Omega - 2\pi\delta_{f_0})$
        \State $r_{\ij}\gets \exp(\imath *^{-1}_{\ij}(u_{\jil} - u_{\ijk}))$ \Comment{for every edge $\ij\in E$}
        \State \Return $r,\Omega$
    \end{algorithmic}
\end{algo}

\begin{algo}{\Proc{FindTriangleZero}$(\ijk, \fieldRotationAngle^0,\Omega^0,|z|)$}
   \label{alg:FindTriangleZero}
   \begin{algorithmic}[1]
      \InputConditions{A triangle \(\ijk\) with a rotation $\fieldRotationAngle_{\ij}^0 \in [-\pi, \pi)$ per edge $\ij$, the triangle Gaussian curvature $\Omega_{\ijk}^0 \in [-\pi, \pi)$ and field magnitude $|z_i| \in \mathbb{R}_{>0}$ per vertex $i$. The triangle must be singular of index $\pm 1$, \ie{} $\fieldRotationAngle_{\ij}^0 + \fieldRotationAngle_{\jk}^0 + \fieldRotationAngle_{\ki}^0 + \Omega_{\ijk}^0 = \pm 2\pi$}
      \OutputConditions{The barycentric coordinates $(b_i,b_j,b_k)$ of the zero.}
      \For {$t = 0 \ldots 1$} \Comment{Interpolate from flat to curved triangle}
         \State $\fieldRotationAngle_{\ij} \gets \fieldRotationAngle_{\ij}^0 + (1-t) \tfrac{1}{3} \left( \Omega_{\ijk}^0 - 2\fieldRotationAngle_{\ij}^0 + \fieldRotationAngle_{\jk}^0 + \fieldRotationAngle_{\ki}^0 \right)$
        \State $\fieldRotationAngle_{\jk} \gets \fieldRotationAngle_{\jk}^0 + (1-t) \tfrac{1}{3} \left( \Omega_{\ijk}^0 + \fieldRotationAngle_{\ij}^0 - 2\fieldRotationAngle_{\jk}^0 + \fieldRotationAngle_{\ki}^0 \right)$
        \State $\fieldRotationAngle_{\ki} \gets \fieldRotationAngle_{\ki}^0 + (1-t) \tfrac{1}{3} \left( \Omega_{\ijk}^0 + \fieldRotationAngle_{\ij}^0 + \fieldRotationAngle_{\jk}^0 - 2\fieldRotationAngle_{\ki}^0 \right)$
        \State $\Omega_{\ijk} \gets t \Omega_{\ijk}^0$
        \State Find $b_j,b_k$ solution of Equation~\ref{eq:zero_system} using Newton method.
      \EndFor
      \State \Return $(1-b_j-b_k,b_j,b_k)$
   \end{algorithmic}
\end{algo}

\begin{algo}{\Proc{MapVertex}$(A, B, \connectionUnitary{}^B, \Omega^B, Z, v_A)$}
   \label{alg:MapVertex}
   \begin{algorithmic}[1]
      \InputConditions{Triangle meshes \(A = (V_A,E_A,F_A)\) and \(B=(V_B,E_B,F_B)\), with the connection $\connectionUnitary \in \CC^{E_B}$ on \(B\), curvature $\Omega\in\RR^{F_B}$ on \(B\), a section \(Z \in \RR^{V_A \times V_B}\) encoding a map from \(A \to B\), and a vertex \(v_A \in V_A\) which we would like to map.}
      \OutputConditions{The image of \(v_A\) on \(B\), given as a face \(\ijk \in F_B\) and barycentric coordinates \((b_i, b_j, b_k)\) recording specific point in face \(\ijk\) that \(v_A\) is mapped to.}

      \State \(z^{(v)} \gets Z_{v_A, \bullet}\) \Comment{Take \(v_A\)'th row of \(Z\) as a section on \(B\)}
      \State \(\fieldRotationAngle_{\ij} \gets \arg\left(\frac{z^{(v)}_j}{\connectionUnitary^B_{\ij} z^B_i}\right)\) \Comment{\cref{eq:AngularOneForm} for each \(\ij \in E_B\)}
      \State \(\text{ind}^{z}_{\ijk} \gets \tfrac{1}{2\pi}\left(\dsf_{1}\fieldRotationAngle + \Omega^B\right)_{\ijk}\) \Comment{\cref{eq:IndexForm} for each \(\ijk \in F_B\)}
      \State \(\ijk \gets \text{first face with}\;\text{ind}^{z}_{\ijk} \neq 0\)
      \State \((b_i, b_j, b_k) \gets \Proc{FindTriangleZero}(\ijk, \fieldRotationAngle, \Omega^B, |z^{(v)}|)\)

      \State \Return $\ijk, (b_i, b_j, b_k)$
   \end{algorithmic}
\end{algo}

\newpage

\begin{algo}{\Proc{ComputeCorrespondence}$(A,B,\varphi,\psi,${\tiny$\{(l^A_k,l^B_k)\}, \{(\gamma^A_k, \gamma^B_k)\}$}$)$}
    \label{alg:ComputeCorrespondence}
    \begin{algorithmic}[1]
        \InputConditions{Triangle meshes \(A = (V_A,E_A,F_A)\) and \(B=(V_B,E_B,F_B)\), and optionally: vertex to face maps \(\varphi : V_A \to F_B, \psi : V_B \to F_A\), pairs of matching landmark points \(\{(l^A_k, l^B_k)\}\) on \(A\) and \(B\) respectively, and/or pairs of matching landmark curves \(\{\gamma^A_k, \gamma^B_k)\}\) on \(A\) and \(B\) respectively.}
        \OutputConditions{A discrete section $z:V_{A\times B} \to \CC$ encoding our optimal bijection between $A$ and $B$}
        \CommentLine{Normalize inputs}
        \State \(A \gets \Proc{NormalizeToUnitSurfaceArea}(A)\)
        \State \(B \gets \Proc{NormalizeToUnitSurfaceArea}(B)\)
        \CommentLine{Build connections on \(A\) and \(B\) (\cref{sec:ConstructingSurfaceConnections})}
        \State \(r^A, \Omega^A \gets \Proc{SurfaceConnection}(A)\)\Comment{\cref{alg:SurfaceConnection}}
        \State \(r^B, \Omega^B \gets \Proc{SurfaceConnection}(B)\)
        \CommentLine{Build FEM Matrices (\cref{sec:FiniteElementSpace}; \cref{alg:BuildFEMConnectionMatrices})}
        \State \(L^\nabla_A, M^\nabla_A \gets \Proc{BuildFEMConnectionMatrices}(A, r^A, \Omega^A)\)
        \State \(L^\nabla_B, M^\nabla_B \gets \Proc{BuildFEMConnectionMatrices}(B, r^B, \Omega^B)\)

        \CommentLine{Build pinning potential (\cref{sec:Landmarks})}
        \State \(V \gets [1] \in \RR^{V_A \times V_B}\)
        \State {\(\sigma_A \gets 1, \sigma_B \gets 1\)}
        \Comment{Landmark penalty (\cref{sec:Landmarks})}
        \If {landmark points were provided}
            \For {\(i_A \in V_A, i_B \in V_B\)}
            \State \(V_{i_A, i_B} \gets 1 - \max_k \exp\big( -\tfrac{1}{2\sigma_A^2}d_A(i_A, l_k^A)^2 - \tfrac{1}{2\sigma_{B}^2}d_{B}(i_B, l_k^B)^2 \big)\)
            \EndFor
        \EndIf
        \If {landmark curves were provided}
            \For {\(i_A \in V_A, i_B \in V_B\)}
            \State \(V^\textsf{C} \gets 1 - \max_k \exp\big( -\tfrac{1}{2\sigma_A^2}d_A(i_A, \gamma_k^A)^2 - \tfrac{1}{2\sigma_{B}^2}d_{B}(i_B, \gamma_k^B)^2 \big)\)
            \State \(V_{i_A, i_B} \gets \min(V_{i_A, i_B}, V^C)\)
            \EndFor
        \EndIf
        \CommentLine{Find initial section \(z_0\)}
        \If {\(\varphi\) and \(\psi\) were provided}
        \State \(z_0 \gets \Proc{MapInitialize}(A, B, r^A, r^B, \Omega^A, \Omega^B, \varphi, \psi)\) \Comment{\cref{alg:MapInitialize}}
        \Else
        \State \(z_0 \gets \Proc{RandomComplexMatrix}(|V_A|, |V_B|)\)
        \EndIf
        \CommentLine{Set Ginzburg-Landau parameter based on eigenvalues of \(A\) and \(B\) (\cref{sec:DiscreteGinzburgLandauMinimization})}
        \State \(\lambda \gets 100 (\Proc{MinEigenvalue}(L^\nabla_A, M^\nabla_A) + \Proc{MinEigenvalue}(L^\nabla_B, M^\nabla_B))\)
        \CommentLine{Optimize Ginzburg-Landau energy (\cref{alg:GinzburgLandau})}
        \State \(z \gets \Proc{LBFGS}(z \mapsto \Proc{GinzburgLandau}(A, L^\nabla_A, M^\nabla_A, B, L^\nabla_B, M^\nabla_B, z, \lambda, V), z_0)\)
        \State \Return $z$
    \end{algorithmic}
\end{algo}

\newpage

\begin{algo}{\Proc{MapInitialize}$(A,B,r^A,r^B,\Omega^A,\Omega^B,\varphi,\psi)$}
    \label{alg:MapInitialize}
    \begin{algorithmic}[1]
        \InputConditions{Triangle meshes \(A = (V_A,E_A,F_A)\) and \(B=(V_B,E_B,F_B)\) with edge lengths \(\ell_{A}, \ell_B\), the surface mesh connections $r^A, r^B$ with curvatures $\Omega^A\in\Omega^2(A)$ and $\Omega^B\in \Omega^2(B)$, and vertex to face maps $\varphi:V_A\to F_B$, $\psi:V_B\to F_A$}
        \OutputConditions{A discrete section $z:V_{A\times B} \to \CC$ approximating the graphs of $\varphi$ and $\psi$}
        \CommentLine{First, build a connection \(r^{\varphi, \psi}\) on the product space}
        \State $L_A \gets \mathsf{d}^A_1 (*^A_1)^{-1} (\mathsf{d}^{A}_{1})^\top$ \Comment{2-form Laplacian on $A$}
        \State $L_B \gets \mathsf{d}^B_1 (*^B_1)^{-1} (\mathsf{d}^{B}_{1})^\top$\Comment{2-form Laplacian on $B$}
        \For {$v_A\in V_A$} \Comment{Rig connection on \(B\)-slices to map \(v_A\) to \(\varphi(v_A)\)}
            \State \(\widetilde\Omega_{v_A, \bullet} \gets 2\pi \delta_{\varphi(v_A)}\) \Comment{target slice curvature}
            \State \(u\gets\Proc{LinearSolve}(L_B, \Omega^B - \widetilde \Omega_{v_A, \bullet})\)
            \State $\rho \gets (*^{B}_{1})^{-1} (\mathsf{d}^{B}_{1})^\top u$
            \State $r^{\varphi,\psi}_{(v_A,\ij)} \gets \exp(\imath\,\rho_{\ij})\, r^B_{\ij}$ \Comment{for every edge $\ij\in E_B$}
        \EndFor
        \For {$v_B\in V_B$} \Comment{Rig connection on \(A\)-slices to map \(v_B\) to \(\psi(v_B)\)}
            \State \(\widetilde \Omega_{\bullet, v_B} \gets 2\pi \delta_{\psi(v_B)}\) \Comment{target slice curvature}
            \State \(u\gets\Proc{LinearSolve}(L_A, \Omega^A - \widetilde \Omega_{\bullet, v_B})\)
            \State $\rho \gets (*^{A}_{1})^{-1} (\mathsf{d}^{A}_{1})^\top u$
            \State $r^{\varphi,\psi}_{(v_B,\ij)} \gets \exp(\imath\,\rho_{\ij})\, r^A_{\ij}$ \Comment{for every edge $\ij\in E_A$}
        \EndFor
        \State \Return \(\begin{aligned}[t]
        &\Proc{MinEigenvector}(\\[-2mm]&\hspace{-5mm}Z \mapsto \Proc{SlicewiseConnectionLaplacian}(A, B, r^{\varphi,\psi}, \widetilde\Omega, Z),\\[-2mm]&\hspace{-5mm}Z \mapsto \Proc{ApplyMassMatrix}(A, B, Z))\\[-3mm]\phantom{}\end{aligned}\)
    \end{algorithmic}
\end{algo}

\begin{algo}{\Proc{GinzburgLandau}$(A, L^\nabla_A, M^\nabla_A, B, L^\nabla_B, M^\nabla_B, z, \lambda, V)$}
    \label{alg:GinzburgLandau}
    \begin{algorithmic}[1]
        \InputConditions{Triangle meshes \(A = (V_A,E_A,F_A)\) and \(B=(V_B,E_B,F_B)\) with connection Laplacians \(\smash{L^\nabla_A \in \CC^{V_A \times V_A}, L^\nabla_B\in\CC^{V_B \times V_B}}\), scalar mass matrices \(\smash{M_A \in \RR^{V_A \times V_A}, M_B \in \RR^{V_B \times V_B}}\), as well as

        the current section \(z \in \CC^{V_A \times V_B}\), the Ginzburg-Landau parameter \(\lambda \in \RR_{>0}\), and the pinning potential \(V \in \RR^{V_A \times V_B}\).}
        \OutputConditions{The Ginzburg-Landau energy \(\DiscreteGinzburgLandau_\lambda\) and its gradient \(\nabla_z\DiscreteGinzburgLandau_\lambda\)}
        \State \(U \gets 0 \in \RR^{V_A \times V_B}\)
        \For {\(i \in V_A, j \in V_B\)}
            \State \(U_{i,j} \gets |z_{i,j}|^2-V_{i,j}\)
        \EndFor
        \State \(\begin{aligned}[t]&\DiscreteGinzburgLandau_\lambda \gets \frac{1}{2}\operatorname{Re} \tr\left[z^\dagger \left(L^\nabla_A z \left(M^\nabla B\right)^\top + M^\nabla_A z \left(L^\nabla_B\right)^\top\right)\right] \hspace{6mm}\text{\Comment{Eq.\ref{eq:DiscreteGL}}}\\
        &\hspace{8mm}+ \frac{\lambda}{4}\tr\left[U^\top M_A U M_B^\top\right]\hspace{1mm}\text{\Comment{using real mass matrices \(M_A, M_B\)}}\end{aligned}\)
        \State \(\begin{aligned}[t]\nabla_z\DiscreteGinzburgLandau_\lambda \gets &L_A^\nabla z \left(M^\nabla_B\right)^\top + M_A^\nabla z \left(L^\nabla_B\right)^\top \\
         &+ \lambda \left(M_A U M_B^\top\right) \odot z\hspace{2mm}\text{\Comment{element-wise product}}\end{aligned}\)\Comment{Eq.\ref{eq:DiscreteGLGrad}}
        \State \Return $\DiscreteGinzburgLandau_\lambda, \nabla_z \DiscreteGinzburgLandau_\lambda$
    \end{algorithmic}
\end{algo}

\newpage

\begin{algo}{\Proc{BuildFEMConnectionMatrices}$(A, r, \Omega)$}
    \label{alg:BuildFEMConnectionMatrices}
    \begin{algorithmic}[1]
        \InputConditions{A triangle mesh \(S = (V,E,F)\) with connection $\connectionUnitary \in \CC^{E}$ and curvature $\Omega\in\RR^{F}$.}
        \OutputConditions{The connection Laplacian \(L^\nabla\)}
        \State \(L^\nabla, M^\nabla \gets 0 \in \CC^{V \times V}, 0 \in \RR^{V \times V}\)
        \For {\textbf{corner} \(\corner{i}{jk} \in S\)}
            \Statex\hspace{\algorithmicindent}\lComment{Laplace matrix}
            \State \(\alpha \gets (\ell_{\ij}^2 - \ell_{\jk}^2 + \ell_{\ki}^2)/2\)\Comment{\(\;\alpha = \langle p_j-p_i, p_k-p_i\rangle\)}
            \State \(w_L \gets \frac{\overline{\connectionUnitary}_{\jk}}{a_{\ijk}} * \left[\left(\ell_{\ij}^2 + \ell_{\ki}^2\right) * f_1(\Omega_{\ijk}) + \alpha\, f_2(\Omega_{\ijk})\right]\)\Comment{Algs. \ref{alg:f1},\ref{alg:f2}}
            \State \(L^\nabla_{\jk} += w_L, \quad L^\nabla_{kj} += \overline{w_L}, \quad L^\nabla_{ii} += \frac{1}{4a_{\ijk}}\left(\ell_{\jk}^2 + \Omega^2_{\ijk}\frac{\ell_{\ij}^2 + \alpha + \ell_{\ki}^2}{90}\right)\)
            \Statex\hspace{\algorithmicindent}\lComment{Mass matrix}
            \State \(w_M \gets a_{\ijk}\, \overline{r}_{\jk}\,f_0(\Omega_{\ijk})\)\Comment{Alg. \ref{alg:f0}}
            \State \(M^\nabla_{\jk} += w_M, \quad M^\nabla_{\kj} += \overline{w_M}, \quad M^\nabla_{ii} += \frac{1}{6} a_{\ijk}\)
        \EndFor
        \State \Return \(L^\nabla, M^\nabla\)
    \end{algorithmic}
\end{algo}

\begin{algo}{\(f_0(s)\)\Comment{Helper for \Proc{BuildFEMConnectionMatrices}}}
    \label{alg:f0}
    \begin{algorithmic}[1]
        \InputConditions{A real number \(s \in \RR\).}
        \OutputConditions{The value \(f_0(s)\) used in \citet[Eq.17]{knoppel2013globally}.}
        \CommentLine{The function \(f_0(s)\) has a removable singularity at \(s=0\). One can use the Chebyshev expansion provided by Kn\"{o}ppel \etal, or the simple Taylor expansion given here}
        \If {\(|s| < \frac{1}{10}\)}
            \State \Return \( \frac{1}{12} - \frac{s^2}{360} + \frac{s^4}{20160} + \imath \left(\frac{s}{60} - \frac{s^3}{2520} + \frac{s^5}{181440}\right)\)
        \Else
            \State \Return \(\frac{1}{3s^4}\left(-6 - 6\imath s + 3s^2 \imath s^3 + 6e^{\imath s}\right)\)
        \EndIf
    \end{algorithmic}
\end{algo}

\begin{algo}{\(f_1(s)\)\Comment{Helper for \Proc{BuildFEMConnectionMatrices}}}
    \label{alg:f1}
    \begin{algorithmic}[1]
        \InputConditions{A real number \(s \in \RR\).}
        \OutputConditions{The value \(f_1(s)\) defined by \citet[\S6.1.1]{knoppel2013globally}.}
        \If {\(|s| < \frac{1}{10}\)}\Comment{Taylor expansion to handle singularity}
            \State \Return \(\frac{s^2}{120} -\frac{s^4}{2688} + \frac{s^6}{129600} + \imath \left(-\frac{s}{24} + \frac{s^3}{504} - \frac{s^5}{17280}\right) \)
        \Else
            \State \Return \(\frac{1}{s^4}\left(3 + \imath s + \frac{s^4}{24} - \imath \frac{s^5}{60} + (-3 + 2 \imath s + \frac{s^2}{2}) e^{\imath s}\right)\)
        \EndIf
    \end{algorithmic}
\end{algo}

\begin{algo}{\(f_2(s)\)\Comment{Helper for \Proc{BuildFEMConnectionMatrices}}}
    \label{alg:f2}
    \begin{algorithmic}[1]
        \InputConditions{A real number \(s \in \RR\).}
        \OutputConditions{The value \(f_2(s)\) defined by \citet[\S6.1.1]{knoppel2013globally}.}

        \If {\(|s| < \frac{1}{10}\)}\Comment{Taylor expansion to handle singularity}
            \State \Return \( -\frac{1}{4} + \frac{s^2}{45} - \frac{s^4}{1120} + \frac{s^6}{56700} + \imath \left( -\frac{s}{24}+ \frac{5s^3}{1008} - \frac{7s^5}{51840}\right)\)
        \Else
            \State \Return \(\frac{1}{s^4}\left(4 + \imath s - \imath\frac{s^3}{6} - \frac{s^4}{12}  + \imath \frac{s^5}{30} + (-4 + 3 \imath s + s^2) e^{\imath s}\right)\)
        \EndIf
    \end{algorithmic}
\end{algo}

\newpage
\begin{algo}{\Proc{SlicewiseConnectionLaplacian}$(A, B, r, \Omega, Z)$}
    \label{alg:SlicewiseConnectionLaplacian}
    \begin{algorithmic}[1]
        \InputConditions{Triangle meshes \(A = (V_A,E_A,F_A)\), \(B=(V_B,E_B,F_B)\), a product space connection \(\connectionUnitary \in \CC^{E_{A \times B}}\) with curvature \(\Omega \in \RR^{F_{A \times B}}\), and a section \(Z \in \CC^{V_A \times V_B}\)}
        \OutputConditions{Applies the product space connection Laplacian (\cref{eq:SlicewiseLaplacian}) for connection \(\connectionUnitary\) to the section \(Z\) to obtain a new section \(Z'\).}
        \For {\(v \in V_A, w \in V_B\)}
            \State \(L^{\nabla,v}_B, \_ \gets \Proc{BuildFEMConnectionMatrices}(B, \connectionUnitary_{v, \bullet}, \Omega_{v, \bullet})\)
            \State \(L^{\nabla,w}_A, \_ \gets \Proc{BuildFEMConnectionMatrices}(A, \connectionUnitary_{\bullet, w}, \Omega_{\bullet, w})\)
            \State \(Z'_{v,w} \gets (M_A)_{v,v} \left(L^{\nabla,v}_B Z^\top\right)_{w,v} + (M_B)_{w,w} \left(L^{\nabla,w}_A Z\right)_{v,w}\)
        \EndFor
        \State \Return \(Z'\)
    \end{algorithmic}
\end{algo}

\begin{algo}{\Proc{ApplyMassMatrix}$(A, B, Z)$}
    \label{alg:ApplyMassMatrix}
    \begin{algorithmic}[1]
        \InputConditions{Triangle meshes \(A = (V_A,E_A,F_A)\), \(B=(V_B,E_B,F_B)\), and a section \(Z \in \CC^{V_A \times V_B}\)}
        \OutputConditions{Applies the product space mass matrix to the section \(Z\).}
        \State \Return \(M_A Z' M_B^\top\)
    \end{algorithmic}
\end{algo}

~\newpage
\section{Parameters and Meshes Statistics}
\label{app:Statistics}
The parameters used for each figure are reported in Table~\ref{tab:statistics}.

\begin{table}[hb]
\caption{Parameters for each figure: number of variables, Ginzburg-Landau parameter schedule, pinning parameter (if applicable) and usage of intrinsic triangulation.}
\label{tab:statistics}
\begin{center}
\rowcolors{2}{aliceblue}{white}
\begin{tabular}{llcccc}
& $|V_A| \times |V_B|$ & $\lambda = t\lambda_0$ & $\sigma$ & iDT  \\
\hline
Fig.~\ref{fig:Teaser} 				        & $4643\times4818$  & $10,50,100$ & $\times$ & \checkmark \\
Fig.~\ref{fig:FmapsUntangling} 			& $1006 \times 998$ &  $10$, $100$ & $\times$ &  \checkmark \\
Fig.~\ref{fig:SymmetryMap}	\figloc{left}	& $3001 \times 3001$ & $10$, $100$ & $\times$ & $\times$ \\
Fig.~\ref{fig:SymmetryMap} \figloc{right}	& $2999 \times 2999$ & $10$, $100$ & $\times$ & $\times$ \\
Fig.~\ref{fig:ZeroPlacement} 				& $502\times2582$  & $100$ & $\times$ & $\times$  \\
Fig.~\ref{fig:MultiresolutionSection} 		& $5000\times 4871$ & $100$ & $\times$ & \checkmark \\
Fig.~\ref{fig:Landmarks} 					& $2485\times 2429$ & $100$ & $1$ & \checkmark  \\
Fig.~\ref{fig:CurvePinning} 				& $3138\times3146$ & $75$ & $\sfrac{1}{10}$ & $\times$  \\
Fig.~\ref{fig:Boundary} 					& $617\times1338$ & $100$ & $1$ & \checkmark  \\
Fig.~\ref{fig:MeshIndependence} \figloc{left} 	& $512\times 1000$  & $100$ & $\times$ &  \\
Fig.~\ref{fig:MeshIndependence} \figloc{right} 	& $4098 \times 2500$  & $100$ & $\times$ &  \\
Fig.~\ref{fig:NnInit} \figloc{top}					& $3000 \times 3000$ & $100$ & $\times$ &  $\times$ \\
Fig.~\ref{fig:NnInit} \figloc{bottom}				& $2000 \times 2000$ & $100$ & $\times$ &  $\times$ \\
Fig.~\ref{fig:LowResolution} 				& $474\times 512$ & $50$ & 1 & $\times$ \\
Fig.~\ref{fig:BiologicalData} \figloc{left}   & $1502\times 1477$ & $100$, $50$ & $\times$ & \checkmark \\
Fig.~\ref{fig:BiologicalData} \figloc{center} & $1502 \times 1477$ & $100$, $50$ & $\times$ & \checkmark \\
Fig.~\ref{fig:BiologicalData} \figloc{right}  & $3000 \times 3000$ & $100$, $50$ & $\times$ & \checkmark  \\
Fig.~\ref{fig:PointCurveConstraints} 		& $954 \times 921$ & $100$ & $\sfrac{1}{\sqrt{20}}$ &  $\times$ \\
Fig.~\ref{fig:FmapsCollapse} 				& $2252 \times 2277$ & $100$ & $\times$ & \checkmark \\
Fig.~\ref{fig:ComparisonInitialization} 	& $3222 \times 6121$ & $10,100$ & $\times$ &  $\times$ \\
Fig.~\ref{fig:ComparisonInterSurfaceMaps} & $4593\times 4017$ & $100$ & $\times$ & $\times$  \\
Fig.~\ref{fig:FemursComparison} \figloc{top}  & $1500\times 1500$ & $100$ & $\times$ & $\times$  \\
Fig.~\ref{fig:FemursComparison} \figloc{middle}& $1500\times 1500$ & $100$ & $\times$ & $\times$  \\
Fig.~\ref{fig:FemursComparison} \figloc{bottom} & $1500\times 1500$ & $100$ & $\times$ & $\times$  \\
Fig.~\ref{fig:OrbifoldComparison} \figloc{left}  & $2000\times 2019$ & $100$ & $\sfrac{1}{4}$ & $\times$  \\
Fig.~\ref{fig:OrbifoldComparison} \figloc{middle}& $2000\times 3000$ & $100$ & $\sfrac{1}{4}$ & $\times$  \\
Fig.~\ref{fig:OrbifoldComparison} \figloc{right} & $2000\times 3000$ & $100$ & $\sfrac{1}{4}$ & $\times$  \\
Fig.~\ref{fig:NonIsometric} 				& $3863\times 3863$ & $25,75$ & $\times$ & \checkmark   \\
\hline
\end{tabular}
\end{center}
\end{table}

\clearpage

\end{document}